\documentclass{article}
\usepackage{graphicx} 

\usepackage[utf8]{inputenc}
\usepackage[T1]{fontenc}
\usepackage{geometry}
\usepackage{mathpazo}
\usepackage{setspace}
\usepackage{amsmath, amsthm, mathtools, amssymb}
\usepackage{algorithm}
\usepackage{algpseudocode}
\usepackage{mathrsfs}
\usepackage{makecell}
\usepackage{shuffle}
\usepackage{bbm}
\usepackage{subcaption}
\usepackage{hyperref}
\usepackage{csquotes}
\usepackage{nomencl}
\usepackage{enumitem}
\usepackage{booktabs,tabularx,makecell,array}
\usepackage{xcolor}

\newcommand{\MC}[1]{\textsc{MC}~#1}  
\newcommand{\Q}[1]{\textsc{Q}~#1}    

\numberwithin{equation}{section}

\usepackage{authblk}

\usepackage{amsmath}
\usepackage{mathtools}
\usepackage{amsfonts}
\usepackage{amsthm}

\newtheorem{theorem}{Theorem}[section]
\newtheorem{remark}[theorem]{Remark}
\newtheorem{proposition}[theorem]{Proposition}
\newtheorem{lemma}[theorem]{Lemma}
\newtheorem{corollary}[theorem]{Corollary}

\newtheorem{definition}[theorem]{Definition}

\newtheorem{example}[theorem]{Example}
\usepackage{graphicx}

\newcommand{\norm}[1]{\lVert#1\rVert}
\newcommand{\abs}[1]{\lvert#1\rvert}

\newcommand{\scalar}[1]{\langle #1 \rangle}
\newcommand{\T}{\mathcal{T}}

\DeclarePairedDelimiter\floor{\lfloor}{\rfloor}
\usepackage[
backend=biber,
style=apa,
sorting=nyt,
natbib=true,
doi=false
]{biblatex}
\title{A Wiener Chaos Approach to Martingale Modelling and Implied Volatility Calibration}

\author[1]{Pere Diaz-Lozano\thanks{\texttt{peredl@math.uio.no}}}
\author[2,3]{Thomas K. Kloster\thanks{\texttt{tkk@econ.au.dk}}}

\affil[1]{Department of Mathematics, University of Oslo}
\affil[2]{Department of Economics and Business Economics, Aarhus University}
\affil[3]{CoRE, Center for Research in Energy: Economics and Markets}

\begin{document}

\maketitle
\vspace{-0.1cm}
\begin{abstract}
    Calibration to a surface of option prices requires specifying a suitably flexible martingale model for the discounted asset price under a risk-neutral measure. Assuming Brownian noise and mean-square integrability, we construct an over-parameterized model based on the martingale representation theorem. In particular, we approximate the terminal value of the martingale via a truncated Wiener--chaos expansion and recover the intermediate dynamics by computing the corresponding conditional expectations.
    Using the Hermite-polynomial formulation of the Wiener chaos, we obtain easily implementable expressions that enable fast calibration to a target implied-volatility surface.
    We illustrate the flexibility and expressive power of the resulting model through numerical experiments on both simulated and real market data.
\end{abstract}

\noindent\textbf{Keywords:} Wiener chaos expansion, calibration of financial models, Monte Carlo methods, martingale modelling

\section{Introduction}

Option pricing models are often assessed by their ability to reproduce observed market implied volatilities of liquidly traded options. Traditionally, models are built bottom-up by specifying risk-neutral dynamics for the underlying price process, aiming for dynamics that are both reasonable and interpretable. Typically, these dynamics are chosen within a parametric family with a small number of economically and statistically meaningful parameters. If the model calibrates well to market prices, it can then be used to compute hedge ratios and to price illiquid or exotic contracts in an arbitrage-free and consistent way. There is a large literature on parametric option pricing models and their relative performance; see, e.g., \textcite{Bakshi1997,Eraker2004,AndersenFusariTodorov2015,Romer2022,math11194201} for extensive studies and comparisons.

A more recent approach that has gained considerable popularity is to model the underlying price using highly over-parameterized architectures, where their degrees of freedom are generic and not directly interpretable. Examples include signature-based expansions \citep[see, e.g.,][]{Signatures1,Signatures2,Signatures3} and neural SDE models \citep[see, e.g,][]{NeuralSDE1,NeuralSDE2}. These model classes can calibrate well to observed market data across different market regimes, but compared to the parametric approach, it is less clear how their implied dynamics behave and how well they price exotic derivatives.

This work introduces a new over-parameterized martingale model for the discounted asset price process $(S_t)_{t\in[0,\T]}$. Let $(\Omega,\mathcal{F},\mathbb{Q},\mathbb{F})$ be a filtered probability space and fix a time horizon $\T>0$. We work under a risk-neutral measure $\mathbb{Q}$, so that the discounted price process is a $(\mathbb Q, \mathbb{F})$-martingale. That is, for any $t\le \T$,
\begin{gather}\label{eq:ce}
S_t = \mathbb{E}\!\left[S_\T \mid \mathcal{F}_t\right].
\end{gather}
In particular, once $S_\T$ is specified, the process on $[0,\T]$ is determined (up to modification) by \eqref{eq:ce}. Throughout, we assume that $\mathbb{F}$ is the natural filtration of a $d$-dimensional $\mathbb{Q}$-Brownian motion, and that the martingale is square-integrable. In this setting, $S_\T\in L^2(\mathcal{F}_\T)$ admits a Wiener chaos expansion of the form
\begin{equation}\label{eq:chaos_expansion_intro}
S_\T=\sum_{n\geq 0}\sum_{\lvert a\rvert =n} d_a \Phi_a, \quad d_a \coloneqq a! \langle \Phi_{a},S_\T\rangle_{L^2(\mathcal{F}_\T)}, 
\end{equation}
where the sum converges in $L^{2}(\mathcal{F}_\T)$ and $(\sqrt{a!} \Phi_a)_a$ is an orthonormal basis of $L^2(\mathcal{F}_\T)$. 
We truncate \eqref{eq:chaos_expansion_intro} and derive explicit closed-form expressions for the conditional expectations $\mathbb{E}[\Phi_a\mid \mathcal{F}_t]$. Treating the coefficients $(d_a)_a$ as trainable weights yields a flexible model class which is universal among square-integrable $(\mathbb Q, \mathbb{F})$-martingales. We call it the \emph{Wiener chaos martingale model}.

Consequently, the proposed class can reproduce price processes generated by Brownian-driven models (e.g., Heston or SABR) arbitrarily well in mean square, as long as these define square-integrable $(\mathbb Q, \mathbb{F})$-martingales. The same holds for Volterra-type rough-volatility models with a weakly singular kernel in the volatility.

In practice, calibration is performed on market call option prices observed across strikes and maturities. Under the risk-neutral measure $\mathbb{Q}$, assuming zero interest rates, the discounted model price of a European call with maturity $T\le \T$ and strike $K$ is
\begin{gather}\label{eq:call-option-price}
    C^{\mathrm{model}}(T, K) = \mathbb{E}\!\left[(S_T-K)_+\right].
\end{gather}

Given quoted prices $\{C^{\mathrm{mkt}}(T,K)\,:\,(T, K) \in \mathbf{T} \times \mathbf{K}\}$, calibration amounts to fitting the coefficients by minimizing a loss that measures the discrepancy with the corresponding model prices. While \eqref{eq:call-option-price} is typically evaluated by Monte Carlo, calibration remains efficient and can be carried out in two steps. We first pre-compute samples of the conditional expectations $\mathbb{E}[\Phi_a\mid \mathcal{F}_t]$, and then fit $(d_a)_a$ using a gradient-based optimization method.

We conduct an extensive study of the proposed model by calibrating it to option price surfaces generated by established parametric models that can be cast in our framework, namely the Heston and rough Heston models. We find that the model fits the resulting implied volatility surfaces accurately when the driving noise is taken to be a $2$-dimensional Brownian motion. To assess whether the over-parameterized model merely overfits, we evaluate out-of-sample quantities, including call prices at maturities not used in calibration and prices of several path-dependent options. In all cases, the Wiener chaos martingale model reproduces these prices with reasonable accuracy, suggesting that the learned dynamics remain economically plausible and useful for pricing and hedging beyond the European option surface. As a final test, we demonstrate that the model also calibrates efficiently and accurately to real SPX option data.

\subsection{Related Literature}
Before proceeding, we briefly comment on previous works that employ chaos expansions for option pricing and hedging.

In \textcite{Chaos_Funahashi}, the authors develop a closed-form approximation for European option prices in Markovian diffusion models of the form $dS_t = S_t\,\sigma(S_t,t)\,dW_t$, where $\sigma$ is a deterministic local-volatility function. Their method combines a Wiener chaos expansion with a successive substitution scheme, yielding explicit low-order formulas for the characteristic function of the normalized return, which is then integrated to obtain call prices. They illustrate the approach on parametric local-volatility specifications such as CEV. In \textcite{Chaos_BasketOption}, this methodology is extended to multi-asset settings for basket-type payoffs. Our perspective differs in that we do not start from a parametric diffusion and derive an expansion from it; instead, we treat the chaos expansion as the model itself and calibrate the expansion coefficients directly.

In \textcite{neufeld2025chaotichedgingiteratedintegrals}, the market is taken as given and the goal is to hedge contingent claims, rather than to fit a model to observed option data. Starting from a general exponentially integrable semimartingale $S$, the authors show that any contingent claim that is an $L^p$-functional of $S$ can be approximated arbitrarily well by a finite number of terms in the iterated-integral chaos expansion of $S$. They then approximate the associated kernels with neural networks and use this representation to solve the $L^p$ hedging problem.

A paper closer in spirit to our approach is \textcite{Chaos_RiskNeutralDistribution}. There, the marginal distribution of $S_t$ is modeled via a truncated Wiener chaos expansion in a Hermite basis, and the coefficients are calibrated to option data separately at each maturity. Because calibration is performed independently across maturities, the resulting collection of risk-neutral marginals need not be consistent with a single arbitrage-free price process. In contrast, our framework starts from a single chaos-driven model and derives all maturities from it in a time-consistent and arbitrage-free manner.

\subsection{Structure of the paper}

The paper is organized as follows. Section~\ref{sec:preliminaries} introduces the notation and recalls the main ingredients of Wiener chaos expansions used throughout. Section~\ref{sec:model} presents the Wiener chaos martingale model and establishes its universality over square-integrable Brownian martingales. In Section~\ref{sec:conditional_expectation}, we derive explicit expressions for the conditional expectations of the chaos basis elements, which are then used in the calibration procedure of Section~\ref{sec:calibration}. Section~\ref{sec:numerical-experiments} reports numerical experiments, including calibration to synthetic option surfaces generated by standard parametric models and a range of robustness checks. Finally, we fit the model to SPX option data.

\section{Preliminaries}\label{sec:preliminaries}

\subsection{Definitions and notation}

Let $\T > 0$ be a fixed time horizon, and let $B = (B_{t})_{t \in [0,\T]}$ be a $d$-dimensional Brownian motion defined on a complete probability space $(\Omega, \mathcal{F}, \mathbb{Q})$. We use the following spaces and notational conventions:
\begin{itemize}
    \item $\mathbb{F} = (\mathcal{F}_{t})_{ t \in [0,\T]}$ is the filtration generated by the Brownian motion $B$ and augmented with the $\mathbb{Q}$-null sets. We assume that $\mathcal{F} = \mathcal{F}_{\T}$.

    \item $\mathbb{E}[\ \cdot \mid \mathcal{F}_t]$ denotes conditional expectation under $\mathbb{Q}$, and $\mathbb{E}[\cdot]$ the corresponding expectation.

    \item $L^{b}(\mathcal{F}_t)$, $t \in [0, \T]$ and $b \geq 1$, is the space of all $\mathcal{F}_{t}$-measurable random variables $X\colon \Omega \to \mathbb{R}$ satisfying $\norm{X}_{b} \coloneqq \mathbb{E}\! [ \abs{X}^{b} ]^{1/b}  < \infty$.

    \item $\mathcal{M}_{\T}^{2}$ denotes the space of all continuous, real-valued $(\mathbb{Q},\mathbb{F})$-martingales $M=(M_t)_{t\in[0,\T]}$ such that $\| M \|_{\mathcal M_\T} \coloneqq \mathbb{E}\!\left[|M_\T|^{2}\right]^{1/2}<\infty$.

    \item $C_{p}^{\infty}(\mathbb{R}^{n \times d})$ is the space of smooth functions $\phi: \mathbb{R}^{n \times d} \to \mathbb{R}$ such that $\phi$ and all its partial derivatives have polynomial growth.
\end{itemize}

We also recall some basic definitions related to Malliavin calculus.

\begin{itemize}

    \item $\mathbb{H}_{\T} = L^{2}([0,\T]; \mathbb{R}^{d})$ is the space of functions $h\colon[0,\T] \to \mathbb{R}^{d}$ such that 
    \begin{gather*}
        \norm{h}_{\T} \coloneqq \Big(\int_{0}^{\T} |h(s)|^{2} ds \Big)^{1/2} < \infty.
    \end{gather*}
    
    \item For $h = (h^{1}, \dots, h^{d}) \in \mathbb{H}_{\T}$, we define the random vector 
    \begin{gather*}
        \mathbf{B}(h) \coloneqq \Big(\int_{0}^{\T} h^{1}(s) dB_{s}^{1}, \dots, \int_{0}^{\T} h^{d}(s) dB_{s}^{d} \Big).
    \end{gather*}

   \item $\mathcal{S}_{\T}$ denotes the class of smooth random variables that are $\mathcal{F}_{\T}$-measurable, which have the form 
   \begin{gather*}
       F = \phi(\mathbf{B}(h_{1}), \dots, \mathbf{B}(h_{n})),
   \end{gather*}
   where the function $\phi((x_{1}^{1}, \dots, x_{1}^{d}), \dots, (x_{n}^{1}, \dots, x_{n}^{d}))$ belongs to $C_{p}^{\infty}(\mathbb{R}^{n \times d})$, and for all $i \leq n$ we have $h_{i} = (h_{i}^{1}, \dots, h_{i}^{d}) \in \mathbb{H}_{\T}$.

   \item The Malliavin derivative of a random variable $F \in \mathcal{S}_{\T}$ is defined as the $d$-dimensional stochastic process
   \begin{gather*}
       D_{s}^{j} F = \sum_{i=1}^{n} \frac{\partial \phi}{\partial x_{i}^{j}} \big( \mathbf{B} (h_{1}), \dots, \mathbf{B} (h_{n}) \big) h_{i}^{j}(s), \quad s \in [0,\T] \quad j=1,\dots,d.
   \end{gather*}

   \item More generally, for $k \geq 2$, we define the $k$-th Malliavin derivative of $F \in \mathcal{S}_{\T}$ along the direction $\alpha = (\alpha_{1}, \dots, \alpha_{k}) \in \{1, \dots, d\}^{k}$ evaluated at $\mathbf{s} = (s_{1}, \dots, s_{k}) \in [0,\T]^{k}$ as the real-valued random variable
   \begin{gather*}
        D_{\mathbf{s}}^{\alpha} F \coloneqq \sum_{i_{1}, \dots, i_{k} = 1}^{n} \frac{\partial^{k} \phi}{\partial x_{i_{1}}^{\alpha_{1}} \cdots \partial x_{i_{k}}^{\alpha_{k}}} \big( \mathbf{B}(h_{1}), \dots, \mathbf{B}(h_{n}) \big) h_{i_{1}}^{\alpha_{1}}(s_{1}) \cdots h_{i_{k}}^{\alpha_{k}}(s_{k}).
    \end{gather*}
\end{itemize}

\subsection{Wiener chaos expansion and Hermite polynomials}
We now present an elementary introduction to the Wiener chaos expansion of $L^2(\mathcal{F}_{\T})$ using Hermite polynomials. We refer to \textcite[Section 1.1]{NualartDavidTMCa} for more details on this topic.

Let $H_{n}(x)$ denote the $n$-th Hermite polynomial, defined by 
\begin{gather*}
    H_{n}(x) = \frac{(-1)^{n}}{n!} e^{\frac{x^{2}}{2}} \frac{d^{n}}{dx^{n}}\big( e^{\frac{-x^{2}}{2}} \big), \quad n \geq 1, \quad H_{0}(x) = 1.
\end{gather*}
These polynomials are the coefficients of the expansion in powers of $t$ of the function $F(x,t) = \exp\big(tx - \frac{t^{2}}{2} \big)$. In fact, we have
\begin{flalign*}
    F(x,t) = \exp \Big( \frac{x^{2}}{2} - \frac{1}{2}(x-t)^{2} \Big)
    &= e^{\frac{x^{2}}{2}} \sum_{n=0}^{\infty} \frac{t^{n}}{n!} \Big( \frac{d^{n}}{dt^{n}}e^{-\frac{(x-t)^{2}}{2}} \Big)\Big\vert_{t=0}
    = \sum_{n=0}^{\infty} t^{n} H_{n}(x).
\end{flalign*}
Furthermore, since $\partial_{x} F(x,t) = t \exp \Big( \frac{x^{2}}{2} - \frac{1}{2}(x-t)^{2} \Big)$, it follows that 
\begin{flalign*}
    H_{n}'(x) &= H_{n-1}(x), \quad n \geq 0,
\end{flalign*}
with the convention $H_{-1}(x) \equiv 0$. We fix an orthonormal basis of $L^{2}([0,\T]; \mathbb{R})$, denoted by $(h_i)_{i \geq 1}$, and define 
\begin{gather*}
    h_i^j(s) \coloneqq e_j \ h_i(s),
\end{gather*}
where $(e_j)_{1 \leq j \leq d}$ denotes the canonical basis of $\mathbb{R}^d$. Notice that this forms an orthonormal basis of $L^{2}([0,\T]; \mathbb{R}^d)$. Consider then the set formed by $a = (a^1, \dots, a^d)$, where $a^j = (a_{1}^j, a_{2}^j, \dots)$ is a sequence of nonnegative integers such that all the terms, except a finite number of them, vanish. Set $a! = \prod_{j=1}^{d} \prod_{i \geq 1}  a_{i}^j !$ and $|a| = \sum_{j=1}^{d} \sum_{i \geq 1} a_{i}^j$, and define 
\begin{gather}\label{eq:phi_a-def}
    \Phi_{a} = \prod_{j=1}^d \prod_{i \geq 1} H_{a_{i}^j} \Big(\int_{0}^{\T} h_{i}^j(s) dB_{s}^j \Big).
\end{gather}
Notice that the infinite product in \eqref{eq:phi_a-def} is well defined because $H_{0} \equiv 1$ and $a_{i}^j \neq 0$ only for a finite number of indices. 

For each $n \geq 0$, let $\mathcal{H}_{\T}^{n}$ denote the closed linear subspace of $L^{2}(\mathcal{F}_{\T})$ generated by the random variables $\{ \Phi_{a}, |a| = n\}$. This space is called the Wiener chaos of order $n$. We then have the following orthogonal expansion, see Theorem 1.1.1 and Proposition 1.1.1 in \textcite{NualartDavidTMCa}.

\begin{theorem}
   \begin{enumerate}
       \item The space $L^{2}(\mathcal{F}_{\T})$ can be decomposed into the infinite orthogonal sum of the subspaces $\mathcal{H}_{\T}^{n}$.
    \item The collection of random variables $\{ \sqrt{a!}  \Phi_{a}, |a| = n \}$ forms a complete orthonormal system in $\mathcal{H}_{\T}^{n}$. 
   \end{enumerate} 
\end{theorem}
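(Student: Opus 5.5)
Two things need proving: the orthogonality relations $\mathbb{E}[\Phi_a\Phi_b]=\delta_{ab}/a!$, and the density of the linear span of $\{\Phi_a\}$ in $L^2(\mathcal{F}_\T)$. Together these give both items at once. Orthogonality shows that $\mathcal{H}^n_\T\perp\mathcal{H}^m_\T$ for $n\neq m$ and that $\{\sqrt{a!}\,\Phi_a\}$ is orthonormal inside each $\mathcal{H}^n_\T$. Completeness of the whole family then gives $L^2(\mathcal{F}_\T)=\bigoplus_n\mathcal{H}^n_\T$. Completeness within $\mathcal{H}^n_\T$ is automatic, since that space is defined as the closed span of $\{\Phi_a,|a|=n\}$.

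\textbf{Orthogonality.} Set $\xi_i^j=\int_0^\T h_i^j(s)\,dB^j_s$. By the Itô isometry these are jointly Gaussian and centred, with $\mathbb{E}[\xi_i^j\xi_k^l]=\delta_{jl}\langle h_i,h_k\rangle=\delta_{jl}\delta_{ik}$. So the $\xi_i^j$ are i.i.d. $N(0,1)$. Each $\Phi_a$ is a finite product of $H_{a_i^j}(\xi_i^j)$ over distinct, independent variables, so $\mathbb{E}[\Phi_a\Phi_b]$ factorises as $\prod_{i,j}\mathbb{E}[H_{a_i^j}(\xi)H_{b_i^j}(\xi)]$ with $\xi\sim N(0,1)$. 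The one-dimensional identity $\mathbb{E}[H_n(\xi)H_m(\xi)]=\delta_{nm}/n!$ comes from the generating function recalled above:
\[
\mathbb{E}\bigl[F(\xi,s)F(\xi,t)\bigr]=e^{-\frac{s^2+t^2}{2}}\,\mathbb{E}\bigl[e^{(s+t)\xi}\bigr]=e^{st}=\sum_{n\ge0}\frac{(st)^n}{n!}.
\]
Comparing coefficients of $s^nt^m$ gives the identity. Exchanging sum and expectation is justified by dominated convergence, using $\sum_n |t|^n|H_n(x)|\le e^{|t||x|+t^2/2}$-type bounds, or simply by analyticity in $(s,t)$.

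\textbf{Completeness.} This is the main step. Let $X\in L^2(\mathcal{F}_\T)$ be orthogonal to every $\Phi_a$; we must show $X=0$.

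First, the monomials $x^n$ are linear combinations of $H_0,\dots,H_n$, since $H_n$ has degree $n$ with leading coefficient $1/n!$. Hence $X$ is orthogonal to every polynomial in finitely many of the $\xi_i^j$. Fix $N$. For $t\in\mathbb{R}^{N\times d}$, the random variable $\exp\bigl(\sum_{i\le N,j}t_i^j\xi_i^j\bigr)$ is an $L^2$-limit of its Taylor polynomials, because Gaussian exponential moments are finite. Therefore
\[
\mathbb{E}\Bigl[X\exp\Bigl(\sum_{i\le N,j} t_i^j\xi_i^j\Bigr)\Bigr]=0\quad\text{for all real } t .
\]
Set $\mathcal{G}_N=\sigma(\xi_i^j:i\le N)$ and $Y=\mathbb{E}[X\mid\mathcal{G}_N]=g(\xi)$. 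The display says that the signed measure $g(x)\,\gamma(dx)$ has vanishing Laplace transform on $\mathbb{R}^{Nd}$. Analytic continuation to imaginary arguments then shows its Fourier transform vanishes, so $g=0$ a.e. Thus $\mathbb{E}[X\mid\mathcal{G}_N]=0$ for every $N$.

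Finally, $\mathcal{G}_N$ increases to $\mathcal{G}_\infty=\sigma(\xi_i^j:i\ge1,\,j\le d)$. By martingale convergence, $\mathbb{E}[X\mid\mathcal{G}_\infty]=0$. It remains to see that $\mathcal{G}_\infty$ coincides with $\mathcal{F}_\T$ up to null sets. Since $(h_i)$ is an orthonormal basis of $L^2([0,\T])$, the indicator $\mathbf{1}_{[0,t]}$ expands as $\sum_i\langle\mathbf{1}_{[0,t]},h_i\rangle h_i$. By the Itô isometry, $B^j_t=\sum_i\langle\mathbf{1}_{[0,t]},h_i\rangle\,\xi_i^j$ in $L^2$, so each $B^j_t$ is $\mathcal{G}_\infty$-measurable modulo null sets. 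Hence $\mathcal{F}_\T\subseteq\overline{\mathcal{G}_\infty}$, and therefore $X=\mathbb{E}[X\mid\mathcal{F}_\T]=0$. The point needing most care is this last identification of $\sigma$-algebras together with the Laplace-to-Fourier uniqueness step; the rest is routine.
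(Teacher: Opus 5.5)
Your argument is correct and is essentially the standard proof that the paper defers to (Theorem 1.1.1 and Proposition 1.1.1 in Nualart). It gets orthogonality from the generating-function identity $\mathbb{E}[F(\xi,s)F(\xi,t)]=e^{st}$, and density by passing from Hermite products to polynomials, then to exponentials, and then using uniqueness of the Laplace transform. The only difference is bookkeeping. The paper defines $\mathcal{H}^n_\T$ directly as the closed span of the $\Phi_a$, so you prove density of their span using only the countable family $\xi_i^j$, which forces your extra step $\mathcal{F}_\T\subseteq\overline{\sigma(\xi_i^j)}$ (this is where completeness of $(h_i)$ enters). Nualart instead works with all Wiener integrals of $h\in\mathbb{H}_\T$ and then needs Proposition 1.1.1 to show that the $\Phi_a$ span his basis-free chaoses.
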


We therefore deduce the following result, called the Wiener chaos expansion of $L^{2}(\mathcal{F}_{\T})$. 
\begin{theorem}
    For any $F \in L^{2}(\mathcal{F}_{\T})$, we have the expansion 
    \begin{gather}
        F  = \sum_{k \geq 0} \sum_{|a|=k} a!  \scalar{\Phi_a, F}_{L^{2}(\mathcal{F}_\T)} \Phi_{a},\label{chaos expansion hermite}
    \end{gather}
    where the sum converges in $L^{2}(\mathcal{F}_{\T})$.
\end{theorem}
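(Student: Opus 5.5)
The expansion follows from the preceding theorem by standard Hilbert space arguments; no new probabilistic input is needed. The plan is to first produce a complete orthonormal system for the whole space $L^{2}(\mathcal{F}_{\T})$ out of the chaos-wise bases, and then apply the abstract Fourier (Parseval) expansion with respect to it.

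\textbf{Step 1: a complete orthonormal system.} By part 2 of the previous theorem, for each $n$ the family $\{\sqrt{a!}\,\Phi_a : |a|=n\}$ is a complete orthonormal system in $\mathcal{H}^n_{\T}$. By part 1, the spaces $\mathcal{H}^n_{\T}$ are mutually orthogonal and their orthogonal sum is $L^{2}(\mathcal{F}_{\T})$. Orthogonality across chaoses gives that the union $\{\sqrt{a!}\,\Phi_a\}_a$, over all multi-indices $a$, is orthonormal in $L^{2}(\mathcal{F}_{\T})$. For completeness, suppose $G\in L^2(\mathcal{F}_\T)$ is orthogonal to every $\Phi_a$. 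Then its projection onto each $\mathcal{H}^n_\T$ is orthogonal to a complete system of that space, hence vanishes. Since $L^2(\mathcal{F}_\T)=\bigoplus_n \mathcal{H}^n_\T$, it follows that $G=0$.

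\textbf{Step 2: the Fourier expansion.} The index set of multi-indices $a$ with finite support is countable, since it is a countable union over $n$ of countable sets. The abstract Hilbert space expansion with respect to the orthonormal basis $e_a=\sqrt{a!}\,\Phi_a$ then gives $F=\sum_a \langle e_a,F\rangle e_a$, with unconditional convergence in $L^2$. Because $\langle e_a,F\rangle e_a = a!\,\langle \Phi_a,F\rangle\Phi_a$, this is exactly the claimed expression. Unconditional convergence also justifies the iterated summation $\sum_{k\ge0}\sum_{|a|=k}$. Alternatively, one can take $F_k$ to be the projection of $F$ onto $\mathcal{H}^k_\T$. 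Then $F=\sum_k F_k$ by part 1, and $F_k=\sum_{|a|=k} a!\langle\Phi_a,F\rangle\Phi_a$ by part 2. This uses that $\langle \Phi_a,F\rangle=\langle\Phi_a,F_k\rangle$ for $|a|=k$, by orthogonality of the chaoses.

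\textbf{Main point of care.} The only delicate issue is the meaning of the inner sum: for $d\ge1$ and each $k\ge1$, the set $\{|a|=k\}$ is infinite. The inner series must therefore be understood as an $L^2$-convergent, order-independent sum. This is guaranteed by Bessel's inequality, since $\sum_{|a|=k} a!\,\langle\Phi_a,F\rangle^2 \le \|F_k\|_2^2<\infty$. With that in place, the statement follows directly.
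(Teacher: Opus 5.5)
Your proposal is correct and follows the paper's route: the paper deduces the expansion from the preceding theorem (orthogonal decomposition into the chaoses $\mathcal{H}^n$, with $\{\sqrt{a!}\,\Phi_a : |a|=n\}$ a complete orthonormal system in each, citing Nualart) by the standard Hilbert-space Fourier expansion. Your extra remarks, that the inner sum over the infinite set $\{|a|=k\}$ converges unconditionally in $L^2$ by Bessel's inequality, and that the iterated summation is therefore legitimate, fill in details the paper leaves implicit.
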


We conclude this section by introducing the finite-dimensional subspace of $L^{2}(\mathcal{F}_\T)$ that will serve as the foundation for our model.

\begin{definition}
Let $P, M \in \mathbb{N}$, and let us define the index set given by 
\begin{gather*}
    \mathcal{A}_{P, M, d} \coloneqq \big\{ a = (a^1, \dots, a^d) \mid a^j = (a_1^j, \dots, a_M^j), 1 \leq |a| \leq P \big\}.
\end{gather*} 
We then define the finite-dimensional space
\begin{gather*}
   L^2(\mathcal{F}_\T) \supset \mathcal{H}_\T^{(P, M)} \coloneqq \text{span}\{ 1\} \oplus \text{span} \big\{ \Phi_a \mid a \in \mathcal{A}_{P, M, d}  \big\}.
\end{gather*}
The dimension of this space can be obtained by using Stars and Bars and Pascal's identity e.g. \citep[see, e.g.,][]{FellerWilliam1968Aitp}
\begin{gather*}
    \sum_{k=0}^{P} \frac{(Md+k-1)!}{(Md -1)!k!} = \frac{(M d+ P)!}{(Md)! P!}.
\end{gather*}
\end{definition}

\section{The Wiener Chaos Martingale Model}\label{sec:model}

With the Wiener chaos expansion in place, we can now define a martingale model for the (discounted) underlying asset price process. 

Similarly to many other works, we choose to model the price process directly under a risk-neutral measure $\mathbb{Q}$. Recall that $\mathbb{Q}$ is called risk-neutral if it is equivalent to the physical measure $\mathbb{P}$ and the discounted asset price process is a local $(\mathbb{Q},\mathbb{F})$--martingale. Under the standard no-arbitrage notion of no free lunch with vanishing risk (NFLVR), the existence of a risk-neutral probability measure implies the abscence of arbitrage; see, e.g., \citep{DelbaenSchachermayer1994,GuasoniRasonyiScachermayer2008}. For ease of notation, we shall always assume here that interest rates and dividend yields are zero, so that the asset is already discounted.

We therefore propose the following martingale model for the price process, which we call the \emph{Wiener chaos martingale model}.

\begin{definition}[Wiener chaos martingale model]\label{def:model}
    Fix truncation levels $P,M\in\mathbb N$, a dimension $d\in\mathbb N$, and an orthonormal family $(h_i)_{1\le i\le M}$ in $L^2([0,\T])$. 
    
    Let us set the terminal price to be the random variable in $\mathcal{H}_\T^{(P,M)}$ given by
    \[
    S_\T^\theta \coloneqq S_0 + \sum_{a\in\mathcal{A}_{P,M,d}} d_a\,\Phi_a .
    \]
    We then define the price process as the square-integrable $(\mathbb{F}, \mathbb Q)$--martingale
    \begin{gather}\label{eq:market-model}
    S_t^\theta \coloneqq \mathbb E\!\big[ S_\T^\theta \mid \mathcal F_t\big]
    = S_0 + \sum_{a\in\mathcal{A}_{P,M,d}} d_a\,\mathbb E\!\big[\Phi_a\mid\mathcal F_t\big], 
    \qquad t\in[0,\T].
    \end{gather}
\end{definition}

The model parameters are the chaos coefficients $\theta=\{d_a: a\in\mathcal{A}_{P,M,d}\}$, which are fitted to option data during calibration. The truncation levels $P$, $M$ and the basis functions $(h_i)_{1\le i\le M}$ are fixed in advance and treated as hyperparameters. We stress that a pricing model based on \eqref{eq:market-model} is arbitrage-free, since $S_t^\theta$ is a $(\mathbb{Q},\mathbb{F})$-martingale by construction.

\begin{remark}
The model defined in \eqref{eq:market-model} may generate sample paths that take negative values, in a manner similar to the model proposed in \textcite{Signatures1}. Nevertheless, the numerical experiments indicate that, after calibration, the model is sufficiently flexible to learn dynamics that remain positive.
\end{remark}

\subsection{Universality property}

The following proposition shows that the model class in Definition~\ref{def:model} is universal in $\mathcal{M}_\T^2$.

\begin{proposition}\label{prop:universality}
Assume $S \in \mathcal{M}_\T^2$ and fix a complete orthonormal system $(h_i)_{i\ge 1}$ of $L^{2}([0,\T])$. Then, for any $\epsilon>0$, there exist $P,M\in\mathbb N$ and $\theta\in\Theta$ such that
\begin{gather}\label{eq:universality}
    \| S - S^\theta \|_{\mathcal{M}_\T^2}^2
    = \mathbb{E}\!\big[ | S_\T - S_\T^\theta |^2 \big]
    \le \epsilon.
\end{gather}
\end{proposition}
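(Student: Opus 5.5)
The plan is to reduce everything to an $L^2$ approximation of the terminal value $S_\T$, and then use the Wiener chaos expansion \eqref{chaos expansion hermite} together with a truncation in the index $i$ of the basis functions. First, the equality in \eqref{eq:universality} should be recorded. Both $S$ and $S^\theta$ are square-integrable martingales, so $S-S^\theta\in\mathcal M^2_\T$, and by definition its norm is $\mathbb E[|S_\T-S^\theta_\T|^2]^{1/2}$. It therefore suffices to find $P,M$ and coefficients $d_a$ with $\|S_\T-S^\theta_\T\|_2^2\le\epsilon$. Here $S_0$ in Definition~\ref{def:model} is taken to be the constant $\mathbb E[S_\T]$. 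This is consistent because $\mathcal F_0$ is trivial, so $S_0=\mathbb E[S_\T]$, and this value is exactly the zeroth chaos coefficient ($a=0$, $\Phi_0=1$).

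Next, apply \eqref{chaos expansion hermite} to $F=S_\T\in L^2(\mathcal F_\T)$, with coefficients $c_a=a!\langle\Phi_a,S_\T\rangle$. The family $(\sqrt{a!}\Phi_a)_a$ is orthonormal, so Parseval gives $\sum_a c_a^2/a!=\|S_\T\|_2^2<\infty$. The index set of all multi-indices $a$ with finite support is countable, so the tail of this summable series can be made small. Concretely, the sets $\mathcal A_{P,M,d}\cup\{0\}$ are increasing in both $P$ and $M$, and their union over $P,M$ is the whole index set, because each $a$ has finite support (so $a^j_i=0$ for $i>M$ for some $M$) and finite order $|a|$. By monotone convergence, $\sum_{a\notin\mathcal A_{P,M,d}\cup\{0\}}c_a^2/a!\to0$ as $P,M\to\infty$ (for instance along $P=M\to\infty$). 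Choose $P,M$ so that this tail is at most $\epsilon$.

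Finally, set $d_a=c_a$ for $a\in\mathcal A_{P,M,d}$. Then $S_\T-S_\T^\theta=\sum_{a\notin\mathcal A_{P,M,d}\cup\{0\}}c_a\Phi_a$, with convergence in $L^2$. By orthogonality its squared norm equals the tail sum above, which is at most $\epsilon$. One point needs checking: the model in Definition~\ref{def:model} uses only the first $M$ basis functions $h_1,\dots,h_M$ of the fixed complete system. So the truncated expansion must be expressible with exactly those $\Phi_a$. This holds by construction, since $\Phi_a$ for $a\in\mathcal A_{P,M,d}$ involves only $h_1,\dots,h_M$.

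The main obstacle is not analytic but bookkeeping. One must verify that the union of the truncated index sets exhausts all finitely supported multi-indices, and use the orthogonality of the chaos decomposition (the first theorem of Section~\ref{sec:preliminaries}) so that the error is exactly the tail of a convergent series rather than merely bounded by it.
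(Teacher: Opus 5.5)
Your proof is correct and follows the same route as the paper: take $d_a = a!\,\langle \Phi_a, S_\T\rangle_{L^2(\mathcal{F}_\T)}$ for $a\in\mathcal{A}_{P,M,d}$ and use the $L^2$-convergence of the chaos expansion. You also make explicit three points the paper leaves implicit: that $S_0=\mathbb{E}[S_\T]$ is exactly the zeroth coefficient, that the sets $\mathcal{A}_{P,M,d}\cup\{0\}$ exhaust all finitely supported multi-indices, and that by orthogonality the error equals the tail sum $\sum c_a^2/a!$.
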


\begin{proof}
Since $S \in \mathcal{M}_{\T}^{2}$, we have $S_{\T} \in L^{2}(\mathcal{F}_{\T})$. 
We therefore have that $S_{\T}$ admits the Wiener chaos representation
\[
S_{\T} = \sum_{k \ge 0} \sum_{|a|=k} 
a!\,\langle \Phi_a, S_{\T} \rangle_{L^{2}(\mathcal{F}_{\T})}\, \Phi_a,
\]
where the series converges in $L^{2}(\mathcal{F}_{\T})$. Let $P,M \in \mathbb{N}$ and define $S_{\T}^{\theta}$ as the truncation of this expansion to the index set $\mathcal{A}_{P,M,d}$, and let us set
\[
d_a = a!\,\langle \Phi_a, S_{\T} \rangle_{L^{2}(\mathcal{F}_{\T})}.
\]
Since the chaos expansion converges in $L^{2}(\mathcal{F}_\T)$, the truncated sum converges to $S_{\T}$ in $L^{2}(\mathcal{F}_\T)$ as $P,M \to \infty$. Therefore, for any $\epsilon>0$, one can choose $P$ and $M$ large enough so that
\[
\mathbb{E}\!\left[\,|S_{\T}-S_{\T}^{\theta}|^{2}\right] \le \epsilon.
\]
\end{proof}

\begin{remark}
To obtain an explicit error bound for \eqref{eq:universality} in terms of the truncation parameters $P$ and $M$, one needs to assume sufficient Malliavin regularity of the random variable $S_\T$; see \textcite{lozano2025eulerschemebsdeswiener}. Such regularity is typically satisfied when $S_\T$ is the terminal value of an SDE with smooth coefficients \citep[see, e.g.,][]{NualartDavidTMCa}.
\end{remark}

A similar approximation result can be obtained for the price of contingent claims whose payoff depends on $S$ through a Hölder-continuous functional. This includes, for instance, European call options.

\begin{proposition}
Assume that \eqref{eq:universality} holds. Let $T \le \T$ and $F : C([0,T]) \to \mathbb{R}$ be a Hölder-continuous functional with exponent $\alpha \in (0,1]$ and constant $L>0$, that is,
\begin{gather}\label{assmpt:holder-functional}
    |F(x) - F(y)| \leq L \|x-y\|_\infty^{\alpha}
\qquad \text{for all } x,y \in C([0,T]).
\end{gather}
Then there exists a constant $\ell>0$, independent of $\epsilon$, such that
\begin{gather*}
    \Big|
\mathbb{E} \big[F((S_t)_{t\leq T})\big]
-
\mathbb{E}\big[F((S_t^\theta)_{t\leq T})\big]
\Big|
\leq
\ell \ \epsilon^{\alpha/2}.
\end{gather*}
\end{proposition}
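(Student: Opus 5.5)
The plan is to bound the difference of expectations by the expectation of the difference, use the Hölder condition \eqref{assmpt:holder-functional}, and then control the supremum norm of $S-S^\theta$ on $[0,T]$ by the terminal $L^2$ error through Doob's maximal inequality. First, by Jensen (or simply the triangle inequality for expectations) together with \eqref{assmpt:holder-functional},
\begin{gather*}
\Big|\mathbb{E}\big[F((S_t)_{t\le T})\big]-\mathbb{E}\big[F((S^\theta_t)_{t\le T})\big]\Big|
\le L\,\mathbb{E}\Big[\sup_{t\le T}|S_t-S^\theta_t|^{\alpha}\Big].
\end{gather*}
Both expectations are finite: $F$ has at most polynomial growth of order $\alpha\le 1$ in the sup norm, since $|F(x)|\le |F(0)|+L\|x\|_\infty^\alpha$, and the running suprema of $S$ and $S^\theta$ are in $L^2$ by Doob. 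Note that both $S$ and $S^\theta$ are continuous, since $S\in\mathcal M^2_\T$ by assumption and $S^\theta$ is a Brownian martingale that admits a continuous modification. Hence $(S_t)_{t\le T}$ and $(S^\theta_t)_{t\le T}$ are genuine $C([0,T])$-valued random variables, and $F$ can be applied to them.

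Second, since $\alpha/2\le 1/2<1$, the map $x\mapsto x^{\alpha/2}$ is concave, so Jensen's inequality gives
\begin{gather*}
\mathbb{E}\Big[\sup_{t\le T}|S_t-S^\theta_t|^{\alpha}\Big]\le \Big(\mathbb{E}\Big[\sup_{t\le T}|S_t-S^\theta_t|^{2}\Big]\Big)^{\alpha/2}.
\end{gather*}
Third, the process $X=S-S^\theta$ is a square-integrable continuous martingale on $[0,\T]$. Doob's $L^2$ maximal inequality applied on $[0,T]$, followed by the fact that $(|X_t|^2)_t$ is a submartingale (so $\mathbb{E}|X_T|^2\le\mathbb{E}|X_\T|^2$), yields
\begin{gather*}
\mathbb{E}\Big[\sup_{t\le T}|X_t|^{2}\Big]\le 4\,\mathbb{E}\big[|X_T|^2\big]\le 4\,\mathbb{E}\big[|S_\T-S^\theta_\T|^2\big]\le 4\epsilon,
\end{gather*}
where the last step is the assumption \eqref{eq:universality}. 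Combining the three steps gives the bound $L\,(4\epsilon)^{\alpha/2}=2^{\alpha}L\,\epsilon^{\alpha/2}$, so $\ell=2^\alpha L$ works, and this constant is indeed independent of $\epsilon$.

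The only point needing care is the passage from the terminal-value control \eqref{eq:universality} to uniform-in-time control of the whole path on $[0,T]$. This is handled by Doob's inequality, which relies on the martingale property of both $S$ and $S^\theta$ with respect to the same filtration $\mathbb F$. That property holds by construction in Definition~\ref{def:model}, so I do not expect a genuine obstacle; the remaining steps are routine.
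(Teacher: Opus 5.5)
Your proof is correct and matches the paper's argument step for step. You use Hölder continuity plus Jensen to reduce to $\bigl(\mathbb{E}[\sup_{t\le T}|S_t-S^\theta_t|^2]\bigr)^{\alpha/2}$, then Doob's $L^2$ maximal inequality and the submartingale property of $|S_t-S^\theta_t|^2$ to reach $4\epsilon$, which gives the same constant $\ell=2^\alpha L$; your remarks on path continuity and integrability of $F(S)$ and $F(S^\theta)$ just make explicit what the paper leaves implicit.
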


\begin{proof}
    Using \eqref{assmpt:holder-functional}, together with Jensen's inequality and Cauchy-Schwarz, we get
    \begin{flalign*}
        \Big|\mathbb{E} \big[F((S_t)_{t\leq T})\big]
- \mathbb{E}\big[F((S_t^\theta)_{t\leq T})\big]
\Big|  \leq L \ \Big( \mathbb{E}\Big[ \sup_{t \leq T} |S_t-S_t^{\theta}|^2 \Big] \Big)^{\alpha/2}.
    \end{flalign*}
    Using Doob's $L^2$ martingale inequality,
    \begin{gather*}
        \mathbb{E}\Big[ \sup_{t \leq T} |S_t-S_t^{\theta}|^2 \Big] \leq 4 \mathbb{E}\big[ | S_T-S_T^{\theta}|^2 \big],
    \end{gather*}
    and because $|S_t-S_t^{\theta}|^2$ is a submartingale, for $T \leq \T$ we have 
    \begin{gather*}
        \mathbb{E}\big[ |S_T-S_T^{\theta}|^2 \big] \leq \mathbb{E} \big[ |S_\T-S_\T^{\theta}|^2 \big] \leq \epsilon,
    \end{gather*}
    where we used \eqref{eq:universality}. Putting everything together gives us 
    \begin{gather*}
        \Big|\mathbb{E} \big[F((S_t)_{t\leq T})\big]
- \mathbb{E}\big[F((S_t^\theta)_{t\leq T})\big]
\Big|  \leq L \ \ 2^\alpha \ \epsilon^{\alpha/2}.
    \end{gather*}
\end{proof}

\section{Computation of conditional expectations}\label{sec:conditional_expectation}

We now turn to the issue of computing the conditional expectations appearing in \eqref{eq:market-model}, namely 
\begin{gather*}
    \mathbb{E}\! \big[ \Phi_{a}  \mid  \mathcal{F}_t \big] \quad \text{for each $a \in \mathcal{A}_{P, M, d}$}. 
\end{gather*}
Since the components of the Brownian motion $B$ are independent, we can factorize the product over each dimension, 
\begin{gather*}
    \mathbb{E}\! \Big[ \prod_{j=1}^d \prod_{i = 1}^M H_{a_{i}^j} \Big(\int_{0}^{\T} h_{i}^j(s) dB_{s}^j \Big)  \mid  \mathcal{F}_t \Big] = \prod_{j=1}^d  \mathbb{E}\! \Big[ \prod_{i = 1}^M H_{a_{i}^j} \Big(\int_{0}^{\T} h_{i}^j(s) dB_{s}^j \Big)  \mid  \mathcal{F}_t^j \Big],
\end{gather*}
where $\mathbb{F}^j = (\mathcal{F}_t^j)_{t \in [0, \T]}$ denotes the filtration generated by the $j$-th component of the Brownian motion $B$. Hence, it suffices to work in the one-dimensional setting, which we assume throughout the remainder of this section.

\subsection{The case of piecewise-constant functions}\label{sec:piecewises_constant}

The most straightforward choice for $(h_i)_{1 \leq i \leq M}$ under which one can easily compute the conditional expectations of
\begin{gather}\label{eq:phi_a}
    \Phi_a = \prod_{i = 1}^M H_{a_{i}} \Big(\int_{0}^{\T} h_{i}(s) dB_{s} \Big) , \quad a = (a_1, \dots, a_M), 
\end{gather}
is given by piecewise-constant step functions. These have been already used in \textcite{BriandPhilippe2014SOBB}, \textcite{lozano2025eulerschemebsdeswiener}, and are defined by 
\begin{gather*}
    h_{i}(s) \coloneqq  \mathbf{1}_{(s_{i-1}, s_{i}]}(s)/\sqrt{\delta_{i}}, \quad  \delta_{i} \coloneqq s_{i}-s_{i-1}, \quad i = 1, \dots, M,
\end{gather*}
where $\{0 = s_0 < s_1 < \cdots < s_M = \T \}$ is some partition of $[0, \T]$. Notice that this can be seen as the truncation of an orthonormal system in $L^{2}([0,\T];\mathbb{R})$, for example by completing it with the Haar basis in each sub-interval. With this choice, we have that the random variable \eqref{eq:phi_a} looks like
\begin{gather*}
     \Phi_a = \prod_{i = 1}^M H_{a_{i}} \Big( \frac{B_{s_{i}} - B_{s_{i-1}} }{\sqrt{\delta_i}} \Big) .
\end{gather*}

In order to compute the conditional expectation, we will use the following lemma, whose proof can be found in \textcite[Lemma 2.5]{BriandPhilippe2014SOBB}.

\begin{lemma}\label{lemma martingale}
    Let $g \in L^{2}([0,\T])$, and let $U_t = \int_{0}^{t} g^{2}(s) ds$. For $n \in \mathbb{N}$, let us define 
    \begin{gather*}
        M_t^{n} = U_t^{n/2} H_{n} \big( B(g)_t / \sqrt{U_t} \big), \quad  B(g)_t = \int_{0}^{t} g(s) dB_s.
    \end{gather*}
    Then $\{ M_t^n\}_{0 \leq t \leq \T}$ is a martingale. Moreover, we have 
    \begin{gather*}
        d M_t^n = g(t) M_t^{n-1} dB_t.
    \end{gather*}
\end{lemma}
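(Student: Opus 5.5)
The plan is to apply Itô's formula to $M_t^n$, viewed as a polynomial in $(U_t, X_t)$ with $X_t \coloneqq B(g)_t$. First I remove the apparent singularity at $U_t=0$. Since $H_n$ has parity $n$, the function
\[
\psi_n(u,x) \coloneqq u^{n/2}H_n(x/\sqrt{u})
\]
is a homogeneous polynomial in $(x,u)$: only the powers $x^{n-2k}u^{k}$ appear. So $\psi_n$ extends smoothly to $u\ge 0$ and is $C^{1,2}$. This step also settles the convention at $t$ with $U_t=0$.

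To identify $\psi_n$, I use the generating function from the preliminaries. From $\sum_n t^n H_n(x)=\exp(tx-t^2/2)$ with $t$ replaced by $\lambda\sqrt{u}$ and $x$ replaced by $x/\sqrt u$, we get
\[
\sum_{n\ge0}\lambda^n\psi_n(u,x)=\exp\big(\lambda x-\tfrac{\lambda^2}{2}u\big).
\]
Reading off coefficients gives two identities:
\[
\partial_x\psi_n=\psi_{n-1}, \qquad \partial_u\psi_n=-\tfrac12\psi_{n-2},
\]
with $\psi_{-1}=\psi_{-2}=0$. The first is also consistent with $H_n'=H_{n-1}$. Combining them,
\[
\partial_u\psi_n+\tfrac12\partial_{xx}\psi_n=0,
\]
i.e.\ $\psi_n$ is space-time harmonic for the heat operator.

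Next I apply Itô's formula to $\psi_n(U_t,X_t)$, using $dU_t=g(t)^2dt$, $dX_t=g(t)dB_t$ and $d\langle X\rangle_t=g(t)^2dt$:
\[
dM_t^n=\Big(\partial_u\psi_n+\tfrac12\partial_{xx}\psi_n\Big)(U_t,X_t)\,g(t)^2dt+\partial_x\psi_n(U_t,X_t)\,g(t)\,dB_t.
\]
The drift vanishes by the heat identity, and $\partial_x\psi_n=\psi_{n-1}$ gives $dM_t^n=g(t)M_t^{n-1}dB_t$.

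It remains to upgrade from local to true martingale. Here $M^{n-1}_t$ is a polynomial in $X_t$ and $U_t\le\|g\|^2$, and $X_t$ is Gaussian, so $\mathbb E\big[\sup_t|M_t^{n-1}|^2\big]<\infty$ by Doob applied to $X$. Hence
\[
\mathbb E\int_0^{\T}g(t)^2|M^{n-1}_t|^2dt<\infty,
\]
and the stochastic integral is a square-integrable martingale.

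I expect the main obstacle to be the behaviour near $U_t=0$ and the precise formal justification of the two derivative identities. Both are settled by working with the polynomial $\psi_n$ via the generating function rather than with the expression $u^{n/2}H_n(x/\sqrt u)$ directly.
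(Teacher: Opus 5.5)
Your proof is correct and complete. The paper does not prove this lemma itself; it defers to Briand--Labart (Lemma~2.5). Your argument is therefore a self-contained replacement rather than a variant of something in the text.

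The most useful step is the passage to the polynomial $\psi_n(u,x)=\sum_k c_k\,x^{n-2k}u^k$. Strictly, this is weighted-homogeneous with $u$ of weight two, not homogeneous in the usual sense, but nothing depends on that wording. Working with $\psi_n$ avoids the $U_t^{-1/2}$ singularity that a direct It\^o computation on $U_t^{n/2}H_n(x/\sqrt{U_t})$ would hit. It also pins down $M^n_t=\psi_n(0,X_t)$, which is $0$ a.s. for $n\ge 1$, wherever $U_t=0$. The statement leaves this value implicit, but the paper needs it: it applies the lemma with $g=\mathbf{1}_{(s_{u-1},s_u]}$, for which $U_t=0$ on $[0,s_{u-1}]$.

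A common alternative is to expand the exponential martingale $\exp(\lambda B(g)_t-\tfrac{\lambda^2}{2}U_t)=\sum_n\lambda^n M^n_t$ and match powers of $\lambda$ in $d\mathcal{E}=\lambda g\,\mathcal{E}\,dB$. That route must justify exchanging the series with the stochastic integral. You use the generating function only to derive the deterministic identities $\partial_x\psi_n=\psi_{n-1}$ and $\partial_u\psi_n=-\tfrac12\psi_{n-2}$, so no such exchange is needed.

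Your last step is also right and necessary: the drift-free It\^o formula alone gives only a local martingale. Doob's inequality applied to $X=B(g)$ then shows $g\,M^{n-1}$ is in $L^2(\Omega\times[0,\T])$, so $M^n$ is a true square-integrable martingale.
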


We then obtain the following formula, which was first proved in \textcite[Proposition 2.7]{BriandPhilippe2014SOBB}. We reproduce here the proof for completeness.

\begin{proposition}\label{prop:conditional-expectation-phi-a}
Fix $t\in[0,\T]$ and let $u\in\{1,\dots,M\}$ be such that $t\in(s_{u-1},s_u]$. If $a=(a_1,\dots,a_M)$ satisfies $a_{u+1}=\cdots=a_M=0$, then
\begin{flalign*}
    \mathbb{E}\!\big[\Phi_a \mid \mathcal{F}_t\big]
    = \prod_{i<u} H_{a_i}\!\Big( \frac{B_{s_i}-B_{s_{i-1}}}{\sqrt{\delta_i}} \Big)\,
    \Big( \frac{t-s_{u-1}}{\delta_u} \Big)^{a_u/2}
    H_{a_u}\!\Big( \frac{B_t-B_{s_{u-1}}}{\sqrt{t-s_{u-1}}} \Big).
\end{flalign*}
Otherwise, we have that $\mathbb{E}\!\big[\Phi_a \mid \mathcal{F}_t\big]=0$.
\end{proposition}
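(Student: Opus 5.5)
We split $\Phi_a$ according to the position of $t$ relative to the partition and use that Brownian increments over disjoint intervals are independent. Write $\Phi_a = X \cdot Y \cdot Z$, where
\[
X=\prod_{i<u} H_{a_i}\Big(\tfrac{B_{s_i}-B_{s_{i-1}}}{\sqrt{\delta_i}}\Big),\qquad
Y=H_{a_u}\Big(\tfrac{B_{s_u}-B_{s_{u-1}}}{\sqrt{\delta_u}}\Big),\qquad
Z=\prod_{i>u} H_{a_i}\Big(\tfrac{B_{s_i}-B_{s_{i-1}}}{\sqrt{\delta_i}}\Big).
\]
Since $s_{u-1}<t$, the factor $X$ is $\mathcal{F}_{s_{u-1}}\subset\mathcal{F}_t$-measurable and can be pulled out of the conditional expectation. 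The factor $Z$ depends only on increments of $B$ after $s_u\ge t$, so it is independent of $\mathcal{F}_{s_u}$, which contains both $\mathcal{F}_t$ and $\sigma(Y)$. By the tower property through $\mathcal{F}_{s_u}$,
\[
\mathbb{E}[\Phi_a\mid\mathcal{F}_t]=X\,\mathbb{E}[Y\mid\mathcal{F}_t]\,\mathbb{E}[Z].
\]

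Next we compute $\mathbb{E}[Z]$. The normalized increments $(B_{s_i}-B_{s_{i-1}})/\sqrt{\delta_i}$ for $i>u$ are i.i.d. standard Gaussians. Taking $n=0$ in the orthogonality relation of the theorem above, or equivalently taking the expectation of the generating function $\exp(tx-t^2/2)$, gives $\mathbb{E}[H_n(G)]=0$ for $n\ge1$ and standard normal $G$. Hence $\mathbb{E}[Z]=\prod_{i>u}\mathbb{E}[H_{a_i}(G)]$. This equals $1$ if $a_{u+1}=\dots=a_M=0$ and vanishes otherwise, which gives the dichotomy in the statement.

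Finally we compute $\mathbb{E}[Y\mid\mathcal{F}_t]$ using Lemma~\ref{lemma martingale} with $g=h_u=\mathbf{1}_{(s_{u-1},s_u]}/\sqrt{\delta_u}$ and $n=a_u$. For $r\in(s_{u-1},s_u]$ we have
\[
U_r=\frac{r-s_{u-1}}{\delta_u},\qquad B(g)_r=\frac{B_r-B_{s_{u-1}}}{\sqrt{\delta_u}},
\]
so $B(g)_r/\sqrt{U_r}=(B_r-B_{s_{u-1}})/\sqrt{r-s_{u-1}}$. At $r=s_u$ we get $U_{s_u}=1$, hence $M^{a_u}_{s_u}=Y$. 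By the martingale property,
\[
\mathbb{E}[Y\mid\mathcal{F}_t]=M^{a_u}_t=\Big(\tfrac{t-s_{u-1}}{\delta_u}\Big)^{a_u/2}H_{a_u}\Big(\tfrac{B_t-B_{s_{u-1}}}{\sqrt{t-s_{u-1}}}\Big).
\]
Multiplying this by $X$ gives the claimed formula.

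The main obstacle is making this identification rigorous for $t>s_{u-1}$. The lemma's process is singular when $U_r=0$, that is, for $r\le s_{u-1}$. This can be handled either by reading $U^{n/2}H_n(x/\sqrt U)$ as the homogeneous polynomial $\sum_k c_k x^{n-2k}U^k$, which is well defined and continuous even at $U=0$, or by noting that only $t>s_{u-1}$ is needed here. One must also check that the martingale property of Lemma~\ref{lemma martingale} refers to the filtration $\mathbb{F}$ of $B$ itself. This holds because $M^n$ is a stochastic integral against $B$ with an adapted, square-integrable integrand. An alternative route avoiding the lemma is to condition directly: write $B_{s_u}-B_{s_{u-1}}=(B_t-B_{s_{u-1}})+(B_{s_u}-B_t)$ and use the addition formula $H_n$ for scaled sums, obtained from the generating function $\exp(\lambda x-\lambda^2/2)$. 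This gives the same result.
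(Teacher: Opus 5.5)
Your proof is correct and follows the paper's argument. You pull out the $\mathcal{F}_{s_{u-1}}$-measurable factors, use independence of the increments after $s_u$ (whose Hermite expectations vanish unless $a_i=0$), and apply Lemma~\ref{lemma martingale} on $(s_{u-1},s_u]$. The only differences are cosmetic: you take the normalized $g=\mathbf{1}_{(s_{u-1},s_u]}/\sqrt{\delta_u}$ so that $U_{s_u}=1$, whereas the paper uses the unnormalized indicator and implicitly rescales by $\delta_u^{-a_u/2}$. You also spell out the tower-property step and the identity $\mathbb{E}[H_n(G)]=0$ for $n\ge 1$, which the paper leaves implicit.
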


\begin{proof}
Using the $\mathcal{F}_t$-measurability of Brownian motion up until time $s_{u-1}$, together with the independence of Brownian increments, we get
    \begin{gather*}
     \mathbb{E}\!\big[ \Phi_a \mid \mathcal{F}_t \big] = \prod_{i < u} H_{a_{i}} \Big( \frac{B_{s_{i}} - B_{s_{i-1}} }{\sqrt{\delta_i}} \Big) \   \mathbb{E} \Big[ H_{a_{u}} \Big( \frac{B_{s_{u}} - B_{s_{u-1}} }{\sqrt{\delta_u}} \Big) \ \Big\vert \ \mathcal{F}_t \Big]  \prod_{i > u} \mathbb{E} \Big[ H_{a_{i}} \Big( \frac{B_{s_{i}} - B_{s_{i-1}} }{\sqrt{\delta_i}} \Big) \Big] ,
    \end{gather*}
    which is null as soon as $a_{i} > 0$ for some $i \in \{u+1, \dots, M\}$. One can then conclude by applying Lemma \ref{lemma martingale} with $g(s) = \mathbf{1}_{(s_{u-1}, s_u]}(s)$, obtaining that 
    \begin{gather*}
        \mathbb{E} \Big[ H_{a_{u}} \Big( \frac{B_{s_{u}} - B_{s_{u-1}} }{\sqrt{\delta_u}} \Big) \ \Big\vert \ \mathcal{F}_t \Big] = \Bigg( \frac{t-s_{u-1}}{\delta_u} \Bigg)^{a_{u}/2} H_{a_{u}} \Bigg( \frac{B_{t} - B_{s_{u-1}}}{\sqrt{t - s_{u-1}}} \Bigg).
    \end{gather*}
\end{proof}

\begin{corollary}\label{crl:conditional-expectation-model}
For $t \in (s_{u-1}, s_{u}]$, we have
\begin{gather}\label{eq:conditional-expectation-model}
    \mathbb{E}\!\big[ S_\T^\theta \mid \mathcal{F}_t \big] = S_0 + \sum_{k=1}^{P} \sum_{|a(u)| = k} d_a \prod_{i < u} H_{a_{i}} \Big( \frac{B_{s_{i}} - B_{s_{i-1}} }{\sqrt{\delta_i}} \Big) \times \Bigg( \frac{t-s_{u-1}}{\delta_u} \Bigg)^{a_{u}/2} H_{a_{u}} \Bigg( \frac{B_{t} - B_{s_{u-1}}}{\sqrt{t - s_{u-1}}} \Bigg)
\end{gather}
where $a(u) = (a_1, \dots, a_u, 0, \dots, 0)$. 
\end{corollary}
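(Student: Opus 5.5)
The plan is to combine linearity of conditional expectation with Proposition~\ref{prop:conditional-expectation-phi-a}; we work in the one-dimensional setting adopted for this section. By Definition~\ref{def:model},
\[
\mathbb{E}[S_\T^\theta\mid\mathcal{F}_t]=S_0+\sum_{a\in\mathcal{A}_{P,M,1}} d_a\,\mathbb{E}[\Phi_a\mid\mathcal{F}_t].
\]
Pulling the sum out of the conditional expectation is justified because the sum is finite and each $\Phi_a$ is a polynomial in Gaussian variables, hence in $L^2$. Moreover, $S_0$ is deterministic, so it equals its own conditional expectation.

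Fix $t\in(s_{u-1},s_u]$. I would split the index set $\mathcal{A}_{P,M,1}$ into two classes.
\begin{itemize}
\item The first class consists of multi-indices with $a_{u+1}=\cdots=a_M=0$. These are exactly the multi-indices of the form $a(u)=(a_1,\dots,a_u,0,\dots,0)$ with $1\le |a(u)|\le P$.
\item The second class consists of the remaining multi-indices.
\end{itemize}
By the ``otherwise'' clause of Proposition~\ref{prop:conditional-expectation-phi-a}, every term in the second class has zero conditional expectation, so the whole second class drops out.

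For each $a$ in the first class, I substitute the explicit product formula from Proposition~\ref{prop:conditional-expectation-phi-a}. I then reorganise the remaining finite sum according to the order $k=|a(u)|$, with $k$ running from $1$ to $P$. This gives exactly the double sum in \eqref{eq:conditional-expectation-model}. The sum starts at $k=1$ because $\mathcal{A}_{P,M,d}$ excludes $|a|=0$; the constant term is carried separately by $S_0$.

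There is no real obstacle here; the argument is bookkeeping. The one point needing care is the case $d>1$. There I would first use the factorisation over independent coordinates stated at the start of Section~\ref{sec:conditional_expectation}. I would then apply Proposition~\ref{prop:conditional-expectation-phi-a} componentwise, reading the corollary's formula coordinatewise as a product over $j$ of the same expressions.
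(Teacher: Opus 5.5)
Your proposal is correct and follows the paper's own route, which states the corollary without a separate proof as an immediate consequence of Proposition~\ref{prop:conditional-expectation-phi-a}. It applies linearity of conditional expectation to the finite sum in Definition~\ref{def:model}; the ``otherwise'' clause of the proposition removes every index with a nonzero entry beyond position $u$, and its explicit formula handles the indices of the form $a(u)$, which you then regroup by order $k$. Your coordinatewise reading for $d>1$ also matches how the paper later uses $a(u)=(a^1(u),\dots,a^d(u))$ in the second-moment formula.
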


The following remark will be useful in Section \ref{sec:calibration}.

\begin{remark}\label{rmk:conditional-expectation-model}
By Corollary \ref{crl:conditional-expectation-model}, we have that the random variable given by \eqref{eq:conditional-expectation-model}:
\begin{enumerate}
    \item Depends only on $u \leq M$ i.i.d. Gaussian random variables.
    \item Gives us an orthonormal expansion, meaning that the random variables 
    \begin{gather*}
        \prod_{i < u} H_{a_{i}} \Big( \frac{B_{s_{i}} - B_{s_{i-1}} }{\sqrt{\delta_i}} \Big) H_{a_{u}} \Bigg( \frac{B_{t} - B_{s_{u-1}}}{\sqrt{t - s_{u-1}}} \Bigg)
    \end{gather*}
    are orthonormal.
\end{enumerate}
    
\end{remark}

\subsection{The general case}
Under the piecewise-constant basis, computing conditional expectations is straightforward because, for each $t\in[0,\T]$, all but one of the random variables $\int_{0}^{\T} h_i(s)\,dB_s$ are either $\mathcal{F}_t$-measurable or independent of $\mathcal{F}_t$. This simplification typically fails for a general orthonormal family $(h_i)_{1\le i\le M}$. We therefore develop an alternative approach to compute these conditional expectations in the general case.

To this end, we use the Dyson formula of \textcite{Jin03072016} to compute conditional expectations of random variables satisfying certain technical conditions which are imposed to use Malliavin calculus techniques. We present here the result for random variables with finite-dimensional chaos expansion, which is a direct consequence of (2.20) in \textcite{Jin03072016}.

In order to shorten the notation, we define the time-truncated Gram matrix 
    \begin{gather*}
        G_{ik}(t) \coloneqq \int_{t}^{\T} h_{i}(s) h_{k}(s) ds, \quad \text{and let} \quad I^t \coloneqq  \Big( \int_{0}^t h_1 dB, \cdots,   \int_{0}^t h_M dB \Big).
    \end{gather*}
    
\begin{proposition}
    Let $\Phi_a$ be a random variable with representation \eqref{eq:phi_a}. Define the operator 
    \begin{gather*}
        \mathcal{A}_t \coloneqq \sum_{i, k=1}^{M} G_{ik}(t) \partial_{x_{i}} \partial_{x_{k}} , \quad \text{acting on smooth $f(x_1, \dots, x_M)$.}
    \end{gather*}
    Then, if $f(x) = \prod_{i=1}^{M} H_{a_{i}}(x_i)$, we have that 
    \begin{flalign*}
        \mathbb{E}\!\left[ \Phi_a \mid \mathcal{F}_t \right]
        &= \sum_{n=0}^{\floor{|a|/2}}\frac{1}{2^n n!}
       (\mathcal{A}_t)^n f(I^t),
    \end{flalign*}
    where $(\mathcal{A}_t)^n$ denotes the $n$-fold composition of $\mathcal{A}_t$.
\end{proposition}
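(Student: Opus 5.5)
The plan is a Gaussian-conditioning argument. Write $X_i \coloneqq \int_0^\T h_i\,dB = I^t_i + Y_i$, where $Y_i \coloneqq \int_t^\T h_i\,dB$. The vector $I^t$ is $\mathcal{F}_t$-measurable. The vector $Y=(Y_1,\dots,Y_M)$ is independent of $\mathcal{F}_t$ and, by the Itô isometry, is centred Gaussian with covariance matrix $G(t)$. Since $f(x)=\prod_i H_{a_i}(x_i)$ is a polynomial, the freezing lemma for conditional expectations gives
\[
\mathbb{E}[\Phi_a\mid\mathcal{F}_t] = \big(\mathbb{E}[f(x+Y)]\big)\big\vert_{x=I^t}.
\]
It therefore suffices to compute the deterministic function $u(x)\coloneqq \mathbb{E}[f(x+Y)]$ for a Gaussian vector $Y\sim N(0,G(t))$.

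Next I identify $u$ with the heat semigroup. For $\tau\in[0,1]$, set $u(\tau,x)\coloneqq\mathbb{E}[f(x+\sqrt{\tau}\,Y)]$. Then $u(\tau,\cdot)$ solves the backward heat equation
\[
\partial_\tau u = \tfrac12\mathcal{A}_t u, \qquad u(0,\cdot)=f.
\]
This can be checked with Itô's formula, or equivalently via the Fourier/moment-generating identity $\mathbb{E}[e^{\langle \lambda, Y\rangle}] = e^{\frac12\langle\lambda, G\lambda\rangle}$, which shows that $\mathbb{E}[g(x+Y)] = e^{\frac12\mathcal{A}_t}g(x)$ on polynomials. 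Because $f$ is a polynomial of total degree $|a|$ and $\mathcal{A}_t$ is a constant-coefficient second-order operator, $\mathcal{A}_t$ lowers the degree by exactly $2$. Hence $(\mathcal{A}_t)^n f = 0$ for $n>\floor{|a|/2}$, the exponential series $e^{\frac12\mathcal{A}_t}f = \sum_n \frac{1}{2^n n!}(\mathcal{A}_t)^n f$ is a finite sum, and no convergence issues arise. Evaluating at $x=I^t$ yields the claimed formula.

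The one step to verify carefully is the identity $\mathbb{E}[g(x+Y)]=\sum_n \frac{1}{2^nn!}\mathcal{A}_t^n g(x)$ for polynomial $g$. I would prove it on monomials through the moment generating function: for fixed $\lambda$ take $g(x)=e^{\langle\lambda,x\rangle}$. Then $\mathcal{A}_t g = \langle\lambda,G\lambda\rangle g$, so both sides equal $e^{\langle\lambda,x\rangle+\frac12\langle\lambda,G\lambda\rangle}$. Comparing coefficients of $\lambda^\beta$, which is justified because both sides are entire in $\lambda$, gives the identity for every monomial $x^\beta$, and linearity extends it to all polynomials. This also covers a degenerate $G(t)$, for example at $t=\T$ where $G=0$ and the formula reduces to $\Phi_a$.

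Alternatively, one could cite the Dyson formula (2.20) of \textcite{Jin03072016}. There the operator arises as $\frac12\int_t^\T D_sD_s$, and the Malliavin derivatives act on $f(X)$ as $\sum_{i,k} h_i(s)h_k(s)\partial_i\partial_k f$, which reproduces $\mathcal{A}_t$. The elementary Gaussian argument above is self-contained, so I would present that one and remark on the connection.
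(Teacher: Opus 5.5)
Your argument is correct, but it takes a different route from the paper. The paper gives no independent proof; it specializes the Malliavin-calculus Dyson formula (2.20) of \textcite{Jin03072016}. That formula writes $\mathbb{E}[F\mid\mathcal{F}_t]$ as an exponential series in the operator $\frac12\int_t^\T D_sD_s\,ds$, evaluated along the Brownian path frozen at time $t$. For $F=f(X)$ this operator is $\frac12\mathcal{A}_t$, freezing replaces $X$ by $I^t$, and the series terminates because $f$ is a polynomial — exactly the connection in your last paragraph.

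You instead split $X=I^t+Y$ with $Y\sim N(0,G(t))$ independent of $\mathcal{F}_t$, apply the freezing lemma, and prove $\mathbb{E}[f(x+Y)]=\sum_n \frac{1}{2^n n!}\mathcal{A}_t^n f(x)$ on polynomials by matching Taylor coefficients of $e^{\langle\lambda,x\rangle+\frac12\langle\lambda,G(t)\lambda\rangle}$. That step is sound: the coefficient of $\lambda^\beta/\beta!$ in $\langle\lambda,G\lambda\rangle^n e^{\langle\lambda,x\rangle}$ is exactly $\mathcal{A}_t^n x^\beta$, and the series in $n$ converges locally uniformly in $\lambda$. The heat-equation remark is only heuristic, but your proof does not depend on it.

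Each route buys something different. Yours is self-contained and avoids the Malliavin-differentiability and convergence hypotheses needed to invoke Jin et al. It also does not use orthonormality of the $h_i$, and it handles a degenerate $G(t)$ (e.g.\ $t=\T$) for free. What it gives up is generality: it relies on $\Phi_a$ being a polynomial in finitely many Gaussian Wiener integrals with deterministic kernels. The Dyson formula covers general smooth path-dependent functionals and explains where $\mathcal{A}_t$ comes from.

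One small correction: $\mathcal{A}_t$ lowers the degree by \emph{at least} $2$, not exactly $2$, since it can annihilate a polynomial. This weaker fact is all you need for $(\mathcal{A}_t)^n f=0$ when $n>\floor{|a|/2}$.
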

Notice that the partial derivatives in $(\mathcal{A}_t)^n f$ can be computed explicitly using the product rule for derivatives and the fact that $H_{n}' = H_{n-1}$.

\begin{example}[Legendre basis]\label{ex:Legendre}
An example of an orthonormal basis of $L^2([0,\T])$, which we use in the numerical experiments, can be constructed from the Legendre polynomials. These are polynomials $\{\mathcal{L}_n\}_{n\ge 0}$ on $[-1,1]$ defined by 
\[
\mathcal{L}_n(x)
=
\frac{1}{2^n n!}\frac{d^n}{dx^n}\bigl[(x^2-1)^n\bigr],
\qquad x\in[-1,1],
\]
and they satisfy the orthogonality relation
\[
\int_{-1}^1 \mathcal{L}_n(x)\,\mathcal{L}_m(x)\,dx
=
\frac{2}{2n+1}\,\delta_{nm}.
\]

If we define, for $i\ge 1$,
\[
h_i(s)
:=
\sqrt{\frac{2i-1}{\T}}\;
\mathcal{L}_{\,i-1}\!\left(\frac{2s}{\T}-1\right),
\qquad s\in[0,\T],
\]
then $\{h_i\}_{i\ge 1}$ forms an orthonormal basis of $L^2([0,\T])$.
\end{example}

\section{Calibration procedure}\label{sec:calibration}

In Section~\ref{sec:conditional_expectation}, we derived expressions for the conditional expectations of the chaos basis $\{\Phi_a : a \in \mathcal{A}_{P, M, d}\}$, which yield an explicit representation of the Wiener chaos martingale model via \eqref{eq:market-model}. In this section, we explain how this can be calibrated to observed market data.

A standard approach to calibration under the risk-neutral measure $\mathbb{Q}$ is to fit the model to prices of liquidly traded European plain-vanilla options, either calls or puts. We calibrate the model to European call options, whose payoff is given by $(S_T - K)_+$, where $T$ and $K$ denote the maturity and the strike, respectively. The arbitrage-free price at time $t=0$ in our model is given by
\begin{gather}\label{eq:price_call_model}
    C_\theta^{\mathrm{model}}(T, K) = \mathbb{E}\!\big[ (S_T^{\theta} - K)_+ \big].
\end{gather}

Given $N$ observations of market call prices with maturity–strike pairs $(T_i,K_i)$,
\[
    C^{\mathrm{mkt}}(T_1, K_1), \dots, C^{\mathrm{mkt}}(T_N, K_N),
\]
we follow the standard approach of \textcite{Signatures1,ChristoffersenHestonJacobs2009} and formulate calibration as the optimization problem
\begin{gather}\label{eq:opt_problem}
    \theta^{*} \in \operatorname*{argmin}_{\theta \in \Theta} \sum_{i=1}^{N} \gamma_i 
    \Big( C^{\mathrm{mkt}}(T_i, K_i) - C_\theta^{\mathrm{model}}(T_i, K_i) \Big)^2,
\end{gather}
where $\gamma_i$ denotes the Vega weight associated with the $i$-th option. These are defined by
\[
\gamma_i \coloneqq \frac{1}{\mathrm{Vega}_i^{2}},
\qquad 
\mathrm{Vega}_i \coloneqq \frac{\partial}{\partial \sigma}\,C^{\mathrm{BS}}(T_i,K_i,\sigma)\Big|_{\sigma=\sigma_i^{\mathrm{mkt}}},
\]
where $C^{\mathrm{BS}}(T,K,\sigma)$ denotes the Black--Scholes call price as a function of volatility and $\sigma_i^{\mathrm{mkt}}$ is the market implied volatility corresponding to $(T_i,K_i)$.

The optimization problem \eqref{eq:opt_problem} is typically solved using an iterative scheme, such as stochastic gradient descent (SGD). As a result, calibration requires repeated evaluations of model option prices for many values of $\theta \in \Theta$. It is therefore crucial to compute the expectations in \eqref{eq:price_call_model} efficiently. In the following, we present two approaches for doing so: The first is broadly applicable, while the second is tailored to short maturities and the choice of piecewise-constant basis functions introduced in Section~\ref{sec:piecewises_constant}.

\subsection{Monte Carlo pricing}\label{sec:MC}

The most direct way to approximate \eqref{eq:price_call_model} is by Monte Carlo simulation,
\begin{gather}\label{eq:price_call_model_MC}
    C_\theta^{\mathrm{model}}(T, K)
    = \mathbb{E} \big[ (S_T^{\theta} - K)_+ \big]
    \approx \frac{1}{n_{MC}} \sum_{j=1}^{n_{MC}} (S_T^{\theta}(\omega_j) - K)_+ .
\end{gather}
A key observation is that, in the model \eqref{eq:market-model}, the samples for the conditional expectations
\begin{gather}\label{eq:samples_cond_exp}
    \mathbb{E}\!\big[ \Phi_a \mid \mathcal{F}_T \big](\omega_j),
    \qquad j = 1, \dots, n_{MC},
\end{gather}
can be simulated \emph{offline}, i.e., prior to the calibration of the parameters $\theta$. During the optimization, the same Monte Carlo paths are reused at each iteration, and the values $S_T^\theta(\omega_j)$ are obtained simply as the vector–scalar product between \eqref{eq:samples_cond_exp} and the coefficients $\{d_a , \ a \in \mathcal{A}_{P,M,d}\}$.

\subsubsection{Control variates for variance reduction}

We now discuss how to incorporate variance reduction techniques \citep[see, e.g.,][]{GlassermanPaul2004MCMi,AsmussenGlynn2007} into the Monte Carlo estimator described in Section~\ref{sec:MC}. For ease of notation, fix $(T,K)$ and $\theta \in \Theta$, and set
\[
Y = (S_T^\theta - K)_+ .
\]
The variance of the MC estimator given by \eqref{eq:price_call_model_MC} is $\text{Var}(Y)/n_{MC}$. In order to reduce this, we use as control variate the centered random variable
\[
X_1 = S_T^\theta - S_0.
\]
The Monte Carlo estimator then becomes
\begin{gather}\label{eq:MC-estimator-variance-reduction-1}
\frac{1}{n_{MC}} \sum_{j=1}^{n_{MC}} \Big\{ Y(\omega_j) - \beta\, X_1(\omega_j) \Big\},
\qquad
\beta = \frac{\mathrm{Cov}(Y,X_1)}{\mathrm{Var}(X_1)},
\end{gather}
where $\beta$ is estimated from an independent set of simulated samples. Notice that, since $X_1$ is centered, the estimator \eqref{eq:MC-estimator-variance-reduction-1} is still unbiased. Moreover, if $\beta$ is chosen exactly, the variance of the estimator becomes
\[
\frac{1}{n_{MC}}\mathrm{Var}\!\left(Y - \beta X_1\right)
= \frac{1}{n_{MC}}
\mathrm{Var}(Y)\left(1 - \rho^2\right),
\qquad
\rho = \mathrm{Corr}(Y,X_1).
\]
Hence, the variance is reduced by the factor $1-\rho^2$. In particular, the closer the correlation between $Y$ and $X_1$ is to $\pm 1$, the larger the reduction.

In the case of piecewise-constant kernels, a further variance reduction is possible. By Corollary~\ref{crl:conditional-expectation-model} and Remark~\ref{rmk:conditional-expectation-model} (ii), the second centered moment of $S_T^\theta$ for $T \in (s_{u-1}, s_u]$ can be computed explicitly as
\[
\mathbb{E}\big[(S_T^\theta)^2\big]
=
\sum_{k=0}^{P} \sum_{|a(u)| = k}
\bigg( \frac{t-s_{u-1}}{\delta_u} \bigg)^{a_{u}^1 + \cdots + a_{u}^d}
\frac{d_a^2}{a!},
\]
where $a(u) = (a^1(u), \dots, a^d(u))$ with
$a^j(u) = (a_1^j, \dots, a_u^j, 0, \dots, 0)$ for $1 \le j \le d$. This allows us to introduce a second control variate and use the estimator
\begin{gather}\label{eq:MC-estimator-variance-reduction-2}
\frac{1}{n_{MC}} \sum_{j=1}^{n_{MC}}
\Big\{ Y(\omega_j) - \beta_1 X_1(\omega_j) - \beta_2 X_2(\omega_j) \Big\},
\end{gather}
where $X=(X_1,X_2)$ and
\[
\begin{pmatrix}\beta_1 \\ \beta_2\end{pmatrix}
= \Sigma_X^{-1}\Sigma_{YX},
\qquad
\Sigma_X=\mathrm{Cov}(X,X)\in\mathbb{R}^{2\times2},
\qquad
\Sigma_{YX}=\mathrm{Cov}(Y,X)\in\mathbb{R}^{2}.
\]
In this case, the variance reduction can be expressed in terms of the multiple correlation between $Y$ and the vector $X=(X_1,X_2)$. With the optimal choice of coefficients $(\beta_1,\beta_2)$, one has that the variance of \eqref{eq:MC-estimator-variance-reduction-1} is given by
\[
\mathrm{Var}\left(Y - \beta_1 X_1 - \beta_2 X_2\right)
=
\mathrm{Var}(Y)\left(1 - R^2\right),
\]
where
\[
R^2
=
\frac{\Sigma_{YX}^\top \Sigma_X^{-1}\Sigma_{YX}}{\mathrm{Var}(Y)}
\]
is the coefficient of determination associated with the linear projection of $Y$ onto the span of $(X_1,X_2)$. Thus, the variance is reduced by the factor $1-R^2$, which is typically smaller than in the single control variate case.

\subsection{Quadrature-based pricing}

As explained in Remark~\ref{rmk:conditional-expectation-model}, when the basis functions $(h_i^j)_{1\le i\le M,\;1\le j\le d}$ are chosen to be piecewise-constant on a time grid
\[
\{0 = s_0 < s_1 < \cdots < s_M = \T\},
\]
the payoff $(S_T^{\theta} - K)_+$ depends on $u\times d$ independent Gaussian random variables, where $u \le M$ is such that $T \in (s_{u-1}, s_u]$. Define, for $z\in\mathbb{R}^{u\times d}$,
\[
g_\theta(z) \coloneqq
\sum_{k=0}^{P}
\sum_{|a(u)| = k}
d_a
\Big( \frac{T-s_{u-1}}{\delta_u} \Big)^{a_u/2}
\prod_{j=1}^{d}
\prod_{i=1}^{u}
H_{a_i}(z_i^j).
\]
We then can express
\[
\mathbb{E}\big[(S_T^\theta-K)_+\big]
=
\int_{\mathbb{R}^{u \times d}}
\big(g_\theta(z)-K\big)_+\,
f_{(u,d)}(z)\,dz,
\]
where $f_{(u,d)}$ denotes the density of a standard Gaussian vector in $\mathbb{R}^{u \times d}$.

When $u$ is relatively small, this representation makes it feasible to approximate \eqref{eq:price_call_model} by quadrature instead of Monte Carlo. In this case, we approximate the above integral using Gauss–-Hermite quadrature \citep[see, e.g.,][]{HildebrandFrancisBegnaud1956Itna}.

This approach is particularly effective for short maturities and deep out-of-the-money options, where Monte Carlo can struggle because the event $\{S_T^{\theta} > K\}$ becomes rare.

\section{Numerical experiments}\label{sec:numerical-experiments}

In this section, we calibrate the Wiener chaos martingale model introduced in Section~\ref{sec:model} to option price data. We consider synthetic prices generated by the Heston and rough Heston models, as well as real SPX option data. We use the calibration procedure described in Section~\ref{sec:calibration} with the implementation details provided in Appendix~\ref{sec:calibration-details}.

The code used in the numerical experiments will be made publicly available in a future release.

\subsection{The Heston model}\label{subsec:heston_model}

The classical Heston model of \textcite{Heston1993} is a stochastic volatility model in which the log-price $X=\log(S)$ has the risk-neutral dynamics
\begin{flalign}
    dX_t &=  - \tfrac{1}{2}V_t\,dt + \sqrt{V_t}\left(\rho \,dB_t^{1}+\sqrt{1-\rho^2}\, dB_t^{2}\right), \label{eq:Heston-model} \\
    dV_t &= \kappa (\overline{v}-V_t)\,dt + \varepsilon\sqrt{V_t}\, dB_{t}^{2}, \quad V_0 = v_0, \notag
\end{flalign}
where $B^{1}$ and $B^{2}$ are independent Brownian motions. The characteristic function of $X_t$ is available in closed form; see, e.g., \textcite{Gatheral2006}. For completeness, we recall it in Appendix~\ref{sec:heston-char-funct}.

To apply Proposition~\ref{prop:universality} and justify that the Wiener chaos martingale model can approximate the Heston model, we need $\mathbb{E}[|S_\T|^2] < \infty$. Parameter conditions ensuring the finiteness of this second moment are given in Appendix~\ref{sec:heston-char-funct}, and they are satisfied for the parameter values used in the numerical experiments below.

\subsubsection{Fit to option implied volatilities}

We consider the Heston model \eqref{eq:Heston-model} with parameters
\begin{equation}\label{eq:heston_params}
S_0 = 100, \quad \kappa = 1.5, \quad \bar{v} = 0.04, \quad \varepsilon = 0.5, \quad \rho = -0.7, \quad V_0 = 0.04. 
\end{equation}

For the Wiener chaos martingale model, we set the time horizon to $\T = 1.974$ and compare two orthonormal truncated bases: the piecewise-constant basis of Section~\ref{sec:piecewises_constant} and the Legendre basis of Example~\ref{ex:Legendre}. For the piecewise-constant basis we take $M=7$ sub-intervals on $[0,\T]$, while for the Legendre basis we use $M=12$ functions. In both cases, we truncate the chaos expansion at order $P=2$, which yields 119 parameters in the piecewise-constant case and 324 parameters in the Legendre case.

We calibrate to call option prices on a grid of seven maturities, from one month to two years, and nine strikes spanning a moneyness range of $\pm 20\%$, following \textcite{Signatures1}. To assess interpolation across maturities, we also evaluate the fit at six additional maturities that are not used in calibration.

We run 100 independent calibrations and report in Table~\ref{tab:error_stats_heston} the mean absolute error (MAE) of the implied-volatility smiles and the calibration time, given as mean $\pm$ standard deviation. The errors are reported in basis points.

\begin{table}[ht]
\centering
\begin{tabular}{|l|cc|c|}
\hline
& \multicolumn{2}{c|}{MAE (bp)} 
& Calibration Time (s) \\
\hline
Basis 
& Calibrated & Non-calibrated
& Mean $\pm$ Std \\
\hline
piecewise-constant 
& $7.37 \pm 2.36$  & $19.19 \pm 2.36$  & $64.29 \pm 32.33 $  \\
Legendre           
& $9.82 \pm 1.15$ & $36.17 \pm 9.68$   & $116.75 \pm 35.71$ \\
\hline
\end{tabular}
\caption{MAE (bp) of implied volatility surfaces and calibration times (s), reported as mean $\pm$ standard deviation over 100 calibrations.}
\label{tab:error_stats_heston}
\end{table}

In Figures~\ref{fig:heston_fits} and \ref{fig:heston_fits_oos}, we show a representative fitted implied volatility surface together with the corresponding surface of absolute errors. Figure~\ref{fig:heston_fits} displays the calibrated maturities, while Figure~\ref{fig:heston_fits_oos} shows the same comparison for the non-calibrated maturities.

\begin{figure}[!h]
	\begin{center}
	\begin{subfigure}[b]{0.48\textwidth}
		\includegraphics[scale=0.45]{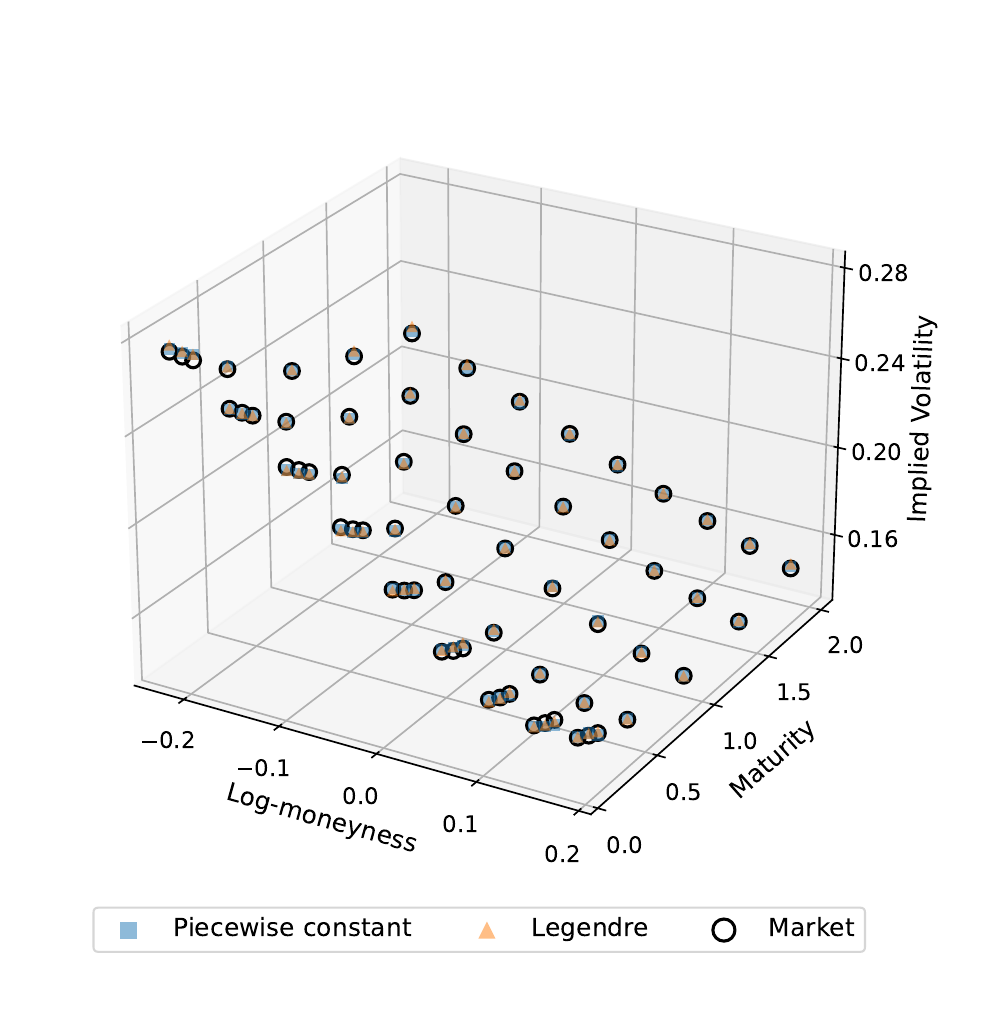}
	\end{subfigure}
	\begin{subfigure}[b]{0.48\textwidth}
		\includegraphics[scale=0.45]{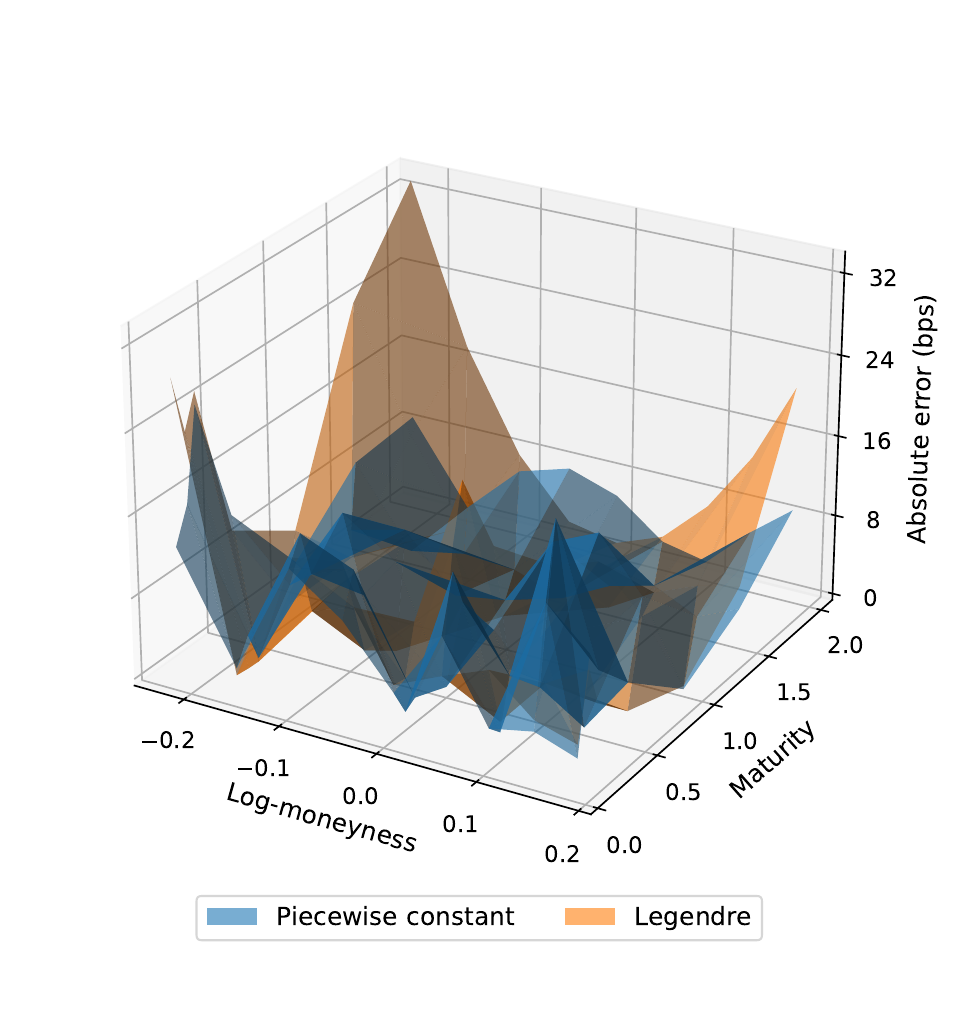}
	\end{subfigure}
	\end{center}
	\vspace{0.4cm}

    \begin{center}
	\begin{subfigure}[b]{0.96\textwidth}
		\includegraphics[width=\linewidth]{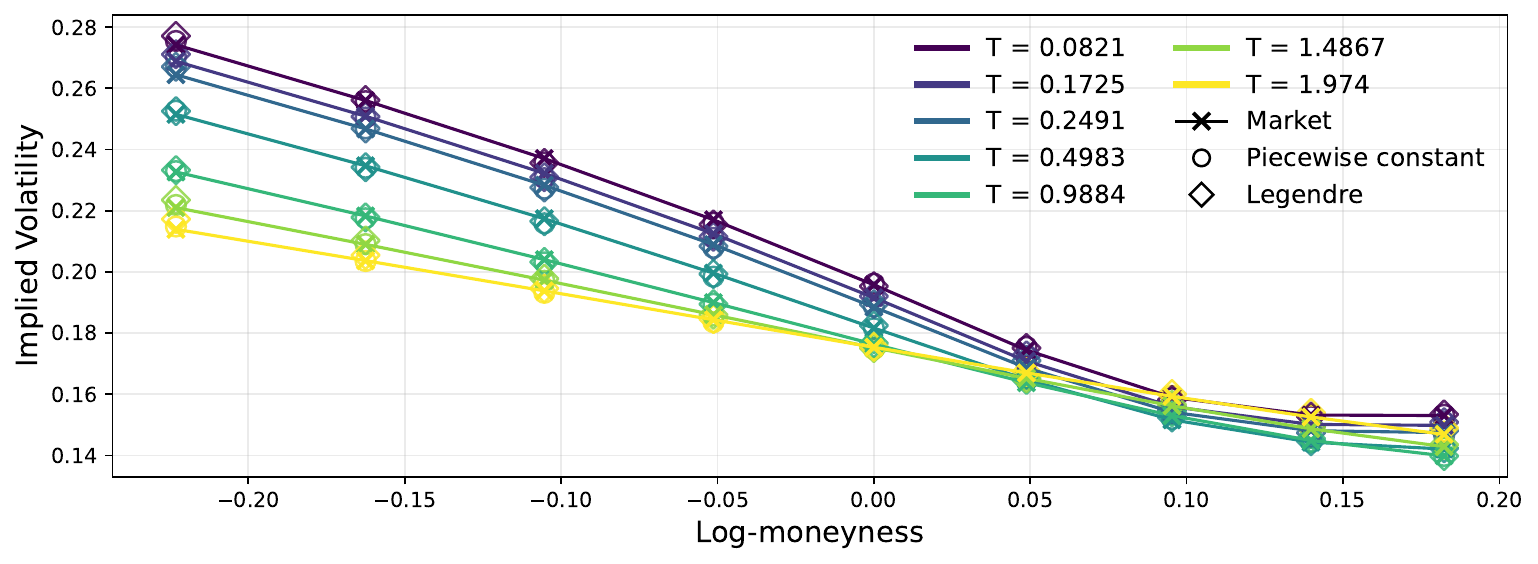}
	\end{subfigure}
    \end{center}
	
	\caption{Implied-volatility surfaces at the calibrated maturities for the Heston model (top left) and the corresponding absolute-error surfaces (top right), obtained with the piecewise-constant and Legendre bases. The bottom panel shows the implied-volatility smiles. The MAE is 7.23 bp for the piecewise-constant basis and 8.80 bp for the Legendre basis.}
\label{fig:heston_fits}

\end{figure}

\begin{figure}[!h]
	\begin{center}
	\begin{subfigure}[b]{0.48\textwidth}
		\includegraphics[scale=0.45]{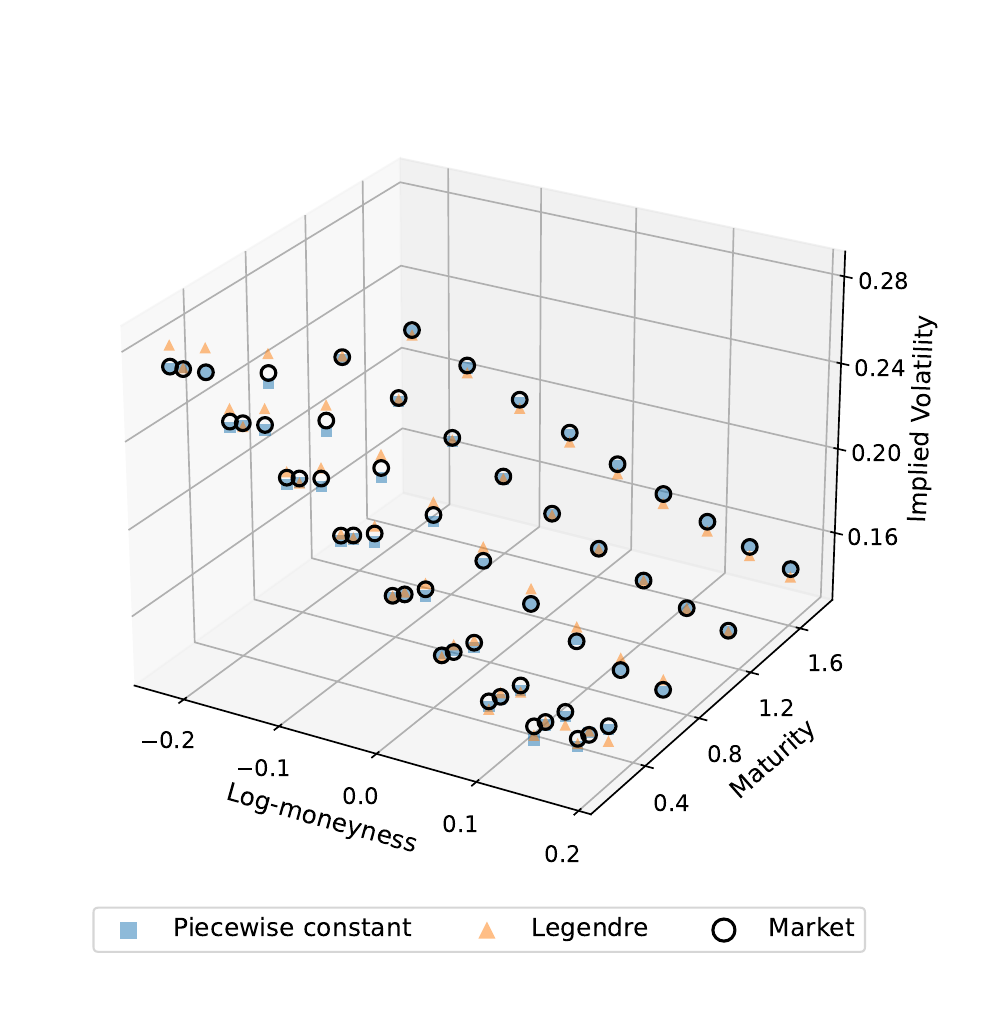}
	\end{subfigure}
	\begin{subfigure}[b]{0.48\textwidth}
		\includegraphics[scale=0.45]{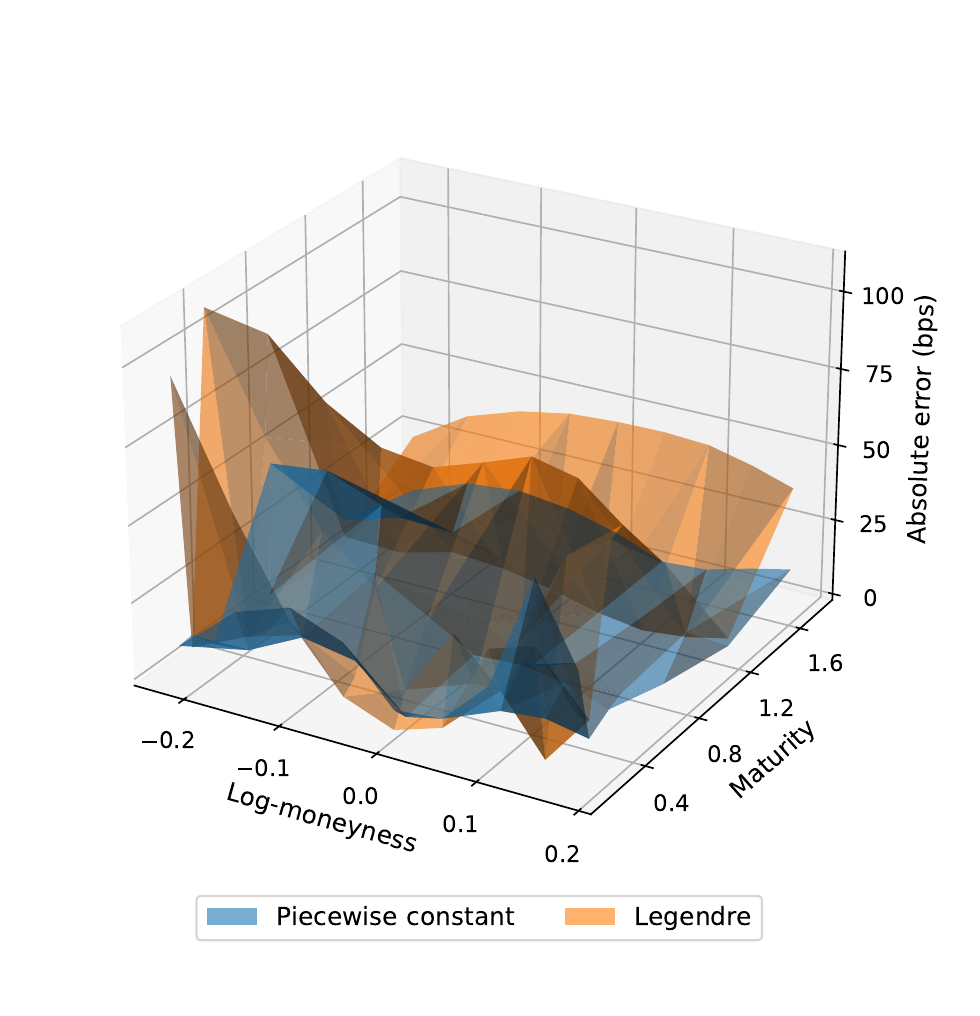}
	\end{subfigure}
	\end{center}
	\vspace{0.4cm}

    \begin{center}
	\begin{subfigure}[b]{0.96\textwidth}
		\includegraphics[width=\linewidth]{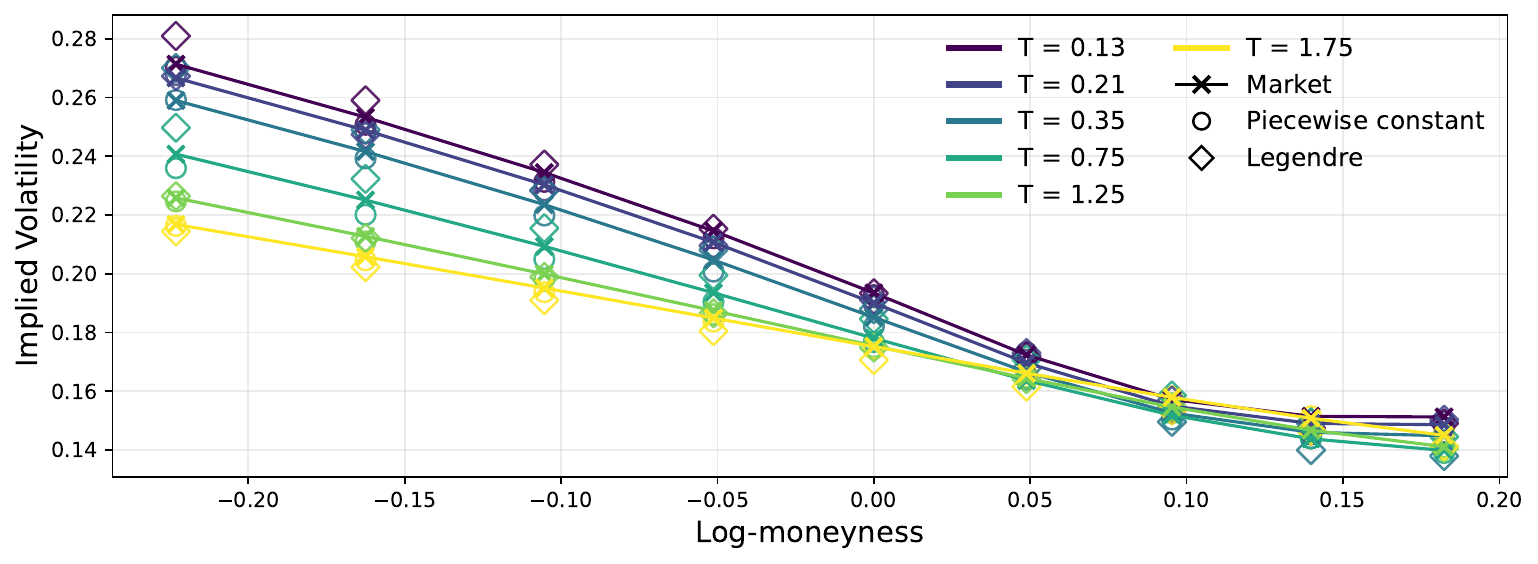}
	\end{subfigure}
    \end{center}

	\caption{Implied-volatility surfaces at the non-calibrated maturities for the Heston model (top left) and the corresponding absolute-error surfaces (top right), obtained with the piecewise-constant and Legendre bases. The bottom panel shows the implied-volatility smiles. The MAE is 16.62 bp for the piecewise-constant basis and 34.95 bp for the Legendre basis.}\label{fig:heston_fits_oos}
\end{figure}

For the calibrated maturities, both specifications fit the Heston implied-volatility surface very well, and the two fitted surfaces are visually indistinguishable. The piecewise-constant basis performs better in MAE. This is consistent with the fact that the piecewise-constant case allows additional numerical improvements, such as quadrature for short maturities and second moment-based variance reduction. Similar conclusions hold for the non-calibrated maturities. Finally, the piecewise-constant specification is roughly twice as fast to calibrate.

\subsubsection{Out-of-sample pricing of path-dependent options}

In the previous example, the Wiener chaos martingale model reproduces the calibrated option prices with high accuracy. Since the model is highly over-parameterized, it is natural to ask whether this performance could be due to overfitting, in the sense that the model matches call prices at the calibrated maturities due to its many degrees of freedom, while producing unrealistic dynamics at the path level.

The fact that the model also performs well at maturities not included in calibration suggests that it is not merely fitting noise, but is capturing meaningful features of the marginal distributions of the underlying price process.

A more demanding test is to compare prices of path-dependent options produced by the calibrated model with those generated by the true model. Such tests are known to be challenging for some model classes. For instance, local-volatility models can match the European option surface almost exactly, yet still misprice exotic derivatives because the implied dynamics are unrealistic. Moreover, \textcite{SchoutensPerfectCalibration} illustrates that models with similarly good fits to plain-vanilla options can produce widely different exotic prices.

For this exercise, we consider three exotic options: forward-starting call options, down-and-out call options, and floating-strike lookback call options. All three admit closed-form pricing formulas in the Black--Scholes model, which allows us to report results in terms of implied volatilities by numerically inverting the Black--Scholes formula with respect to the volatility parameter. We carry out this analysis only for the Heston model, where computing reference prices is straightforward and computationally efficient.

The formulas used to compute the corresponding implied volatilities are given in Appendix~\ref{sec:exotic-iv}. The reference prices under the Heston model are obtained by simulating the dynamics with the quadratic--exponential scheme of \textcite{AndersenHeston2008} and estimating the option prices by Monte Carlo.

\begin{figure}[!h]
	\begin{center}
	\begin{subfigure}[b]{0.96\textwidth}
		\includegraphics[width=\linewidth]{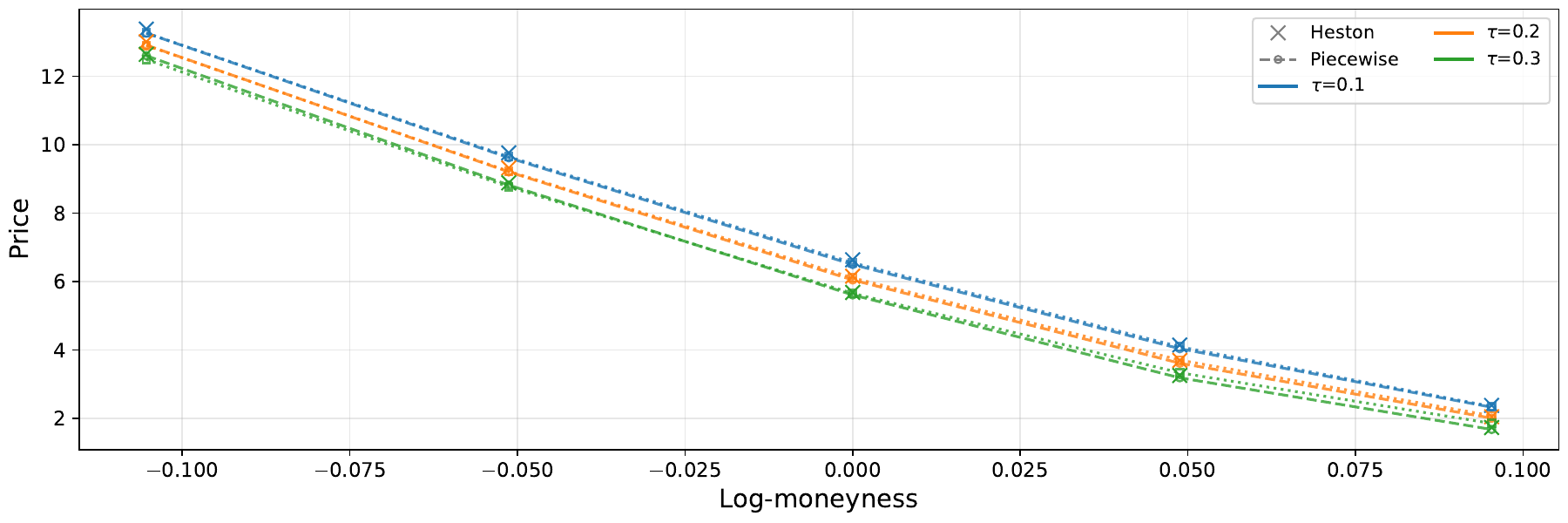}
	\end{subfigure}

    \vspace{0.4cm}
	\begin{subfigure}[b]{0.96\textwidth}
		\includegraphics[width=\linewidth]{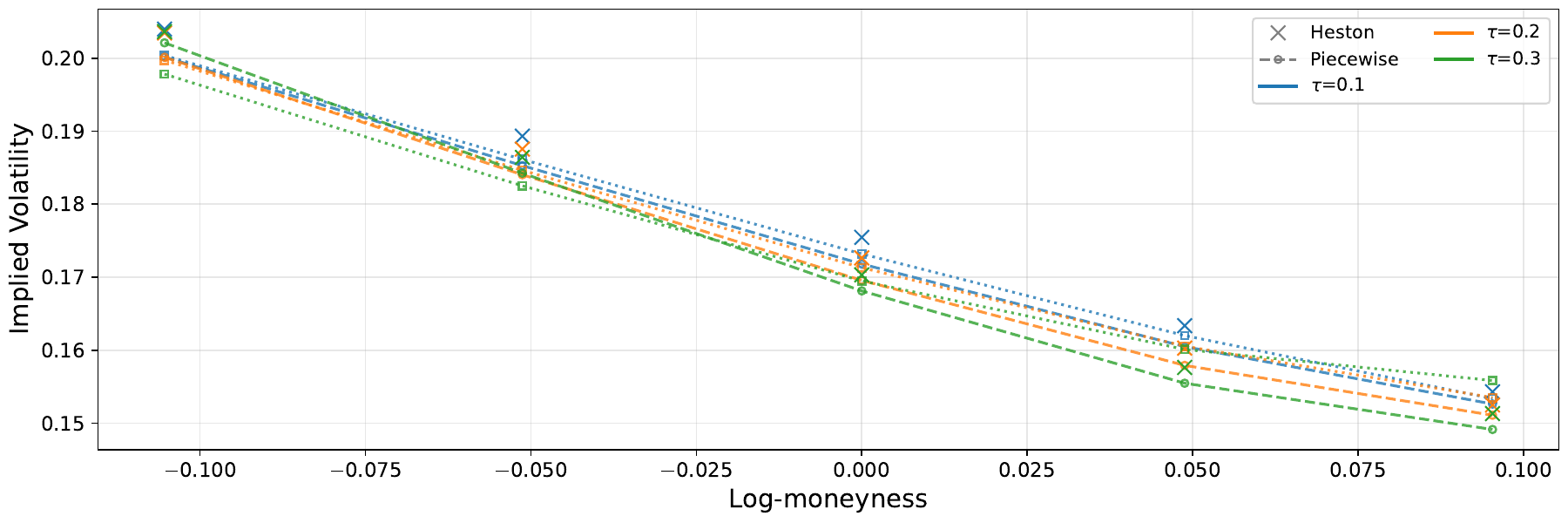}
	\end{subfigure}
	\end{center}
	
	\caption{Forward starting call option comparison, price (top) and Implied Volatility (bottom).}\label{fig:heston_fits_forward_call}
\end{figure}

\begin{figure}[!h]
	\begin{center}
	\begin{subfigure}[b]{0.96\textwidth}
		\includegraphics[width=\linewidth]{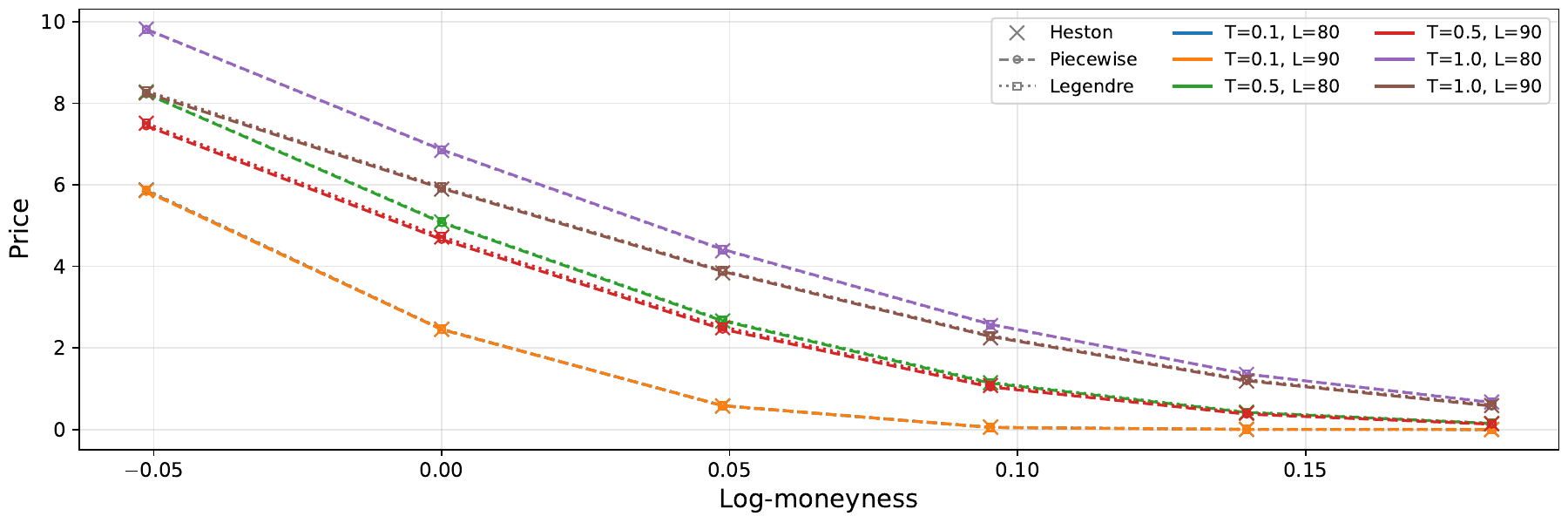}
	\end{subfigure}

    \vspace{0.4cm}
	\begin{subfigure}[b]{0.96\textwidth}
		\includegraphics[width=\linewidth]{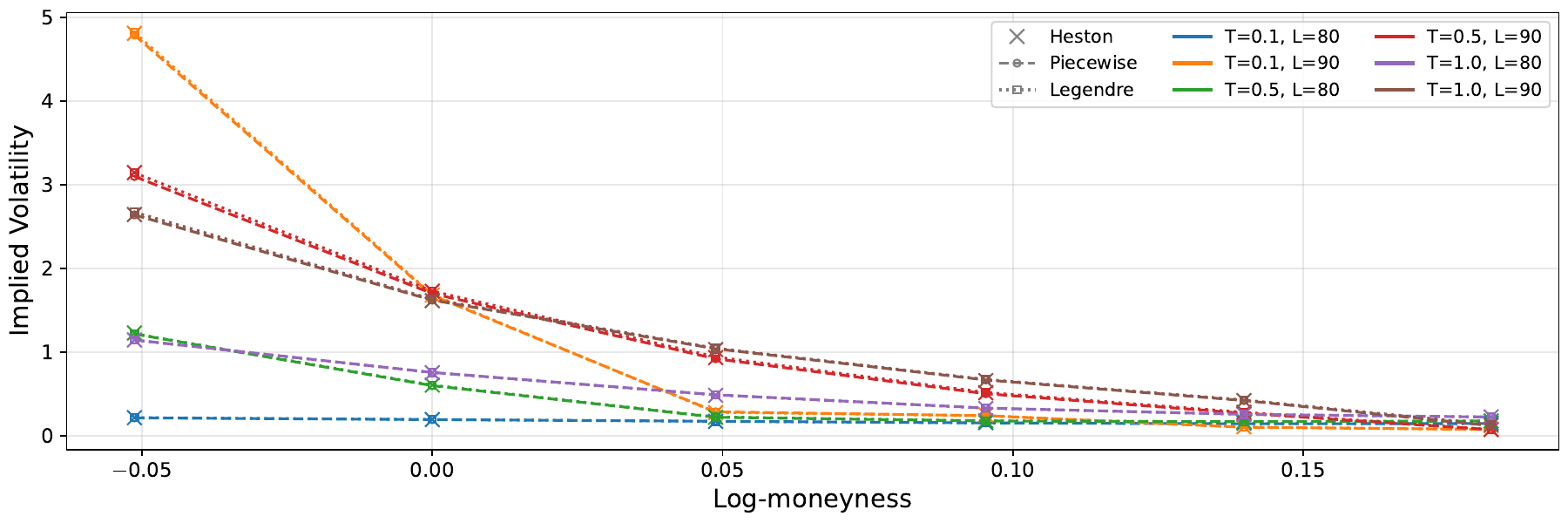}
	\end{subfigure}
	\end{center}
	
	\caption{Down-and-Out call option comparison, price (top) and Implied Volatility (bottom).}\label{fig:heston_fits_barrier}
\end{figure}

\begin{figure}[!h]
	\begin{center}
	\begin{subfigure}[b]{0.96\textwidth}
		\includegraphics[width=\linewidth]{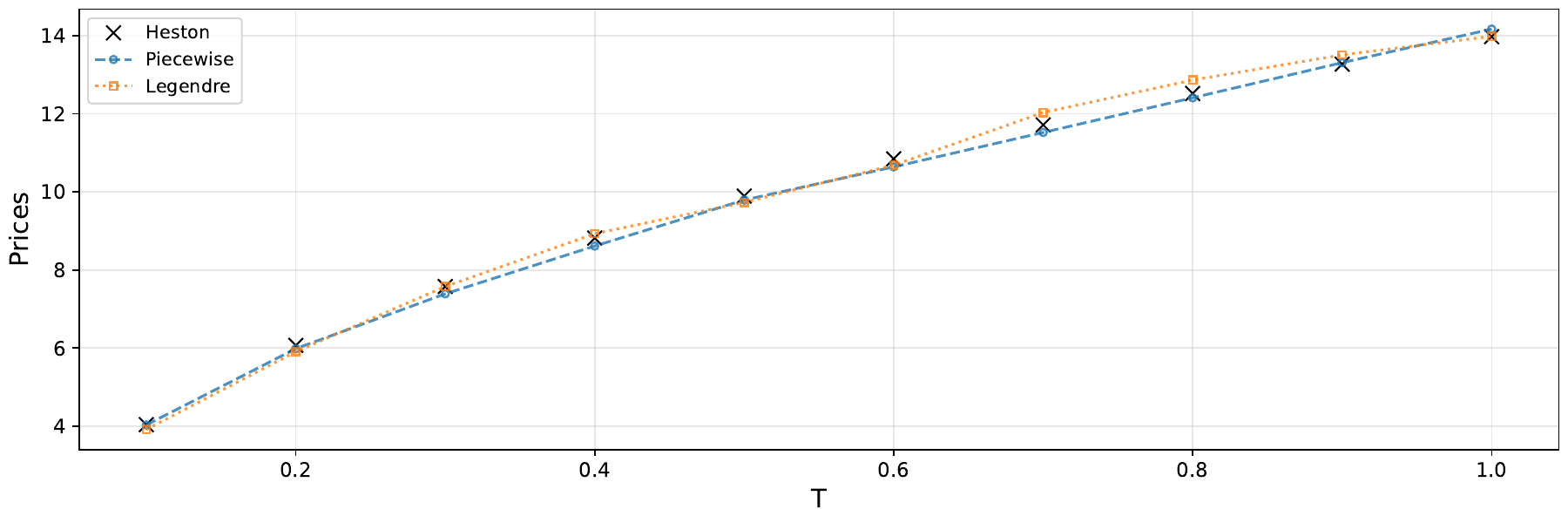}
	\end{subfigure}

    \vspace{0.4cm}
	\begin{subfigure}[b]{0.96\textwidth}
		\includegraphics[width=\linewidth]{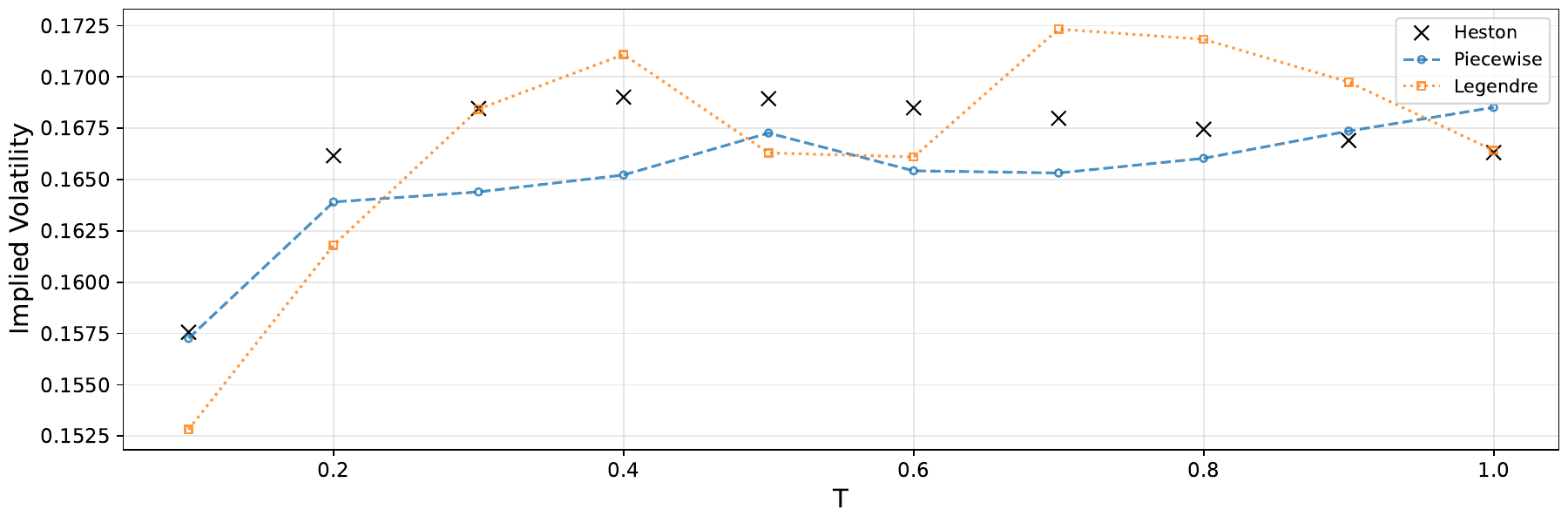}
	\end{subfigure}
	\end{center}
	
	\caption{Floating strike lookback call options comparison, price (top) and Implied Volatility (bottom).}\label{fig:heston_fits_lookback}
\end{figure}

Figures~\ref{fig:heston_fits_forward_call}--\ref{fig:heston_fits_lookback} compare prices and implied volatilities for the three exotic contracts. For forward-starting calls (maturity one year and start times $\tau\in\{0.1,0.2,0.3\}$), the model matches Heston prices well across strikes, although for larger $\tau$ it produces slightly too much curvature in implied volatility. 

For down-and-out calls, we test several maturity--barrier combinations. In this case, both prices and implied volatilities are matched closely across strikes. 

For floating-strike lookback calls, both bases capture the overall term-structure shape. The piecewise-constant basis is slightly below the Heston implied volatilities for intermediate maturities and slightly above at the longest maturity. The Legendre basis shows a more pronounced hump, overshooting Heston for maturities around $T\approx 0.7$--$0.9$.

Overall, the chaos-based model provides a reasonable approximation of exotic option prices and implied volatilities, suggesting that the learned dynamics remain useful for pricing and hedging beyond the European option surface used in calibration.

\subsection{The rough Heston model}
The rough Heston model of \textcite{RoughHeston1,RoughHeston2,AffineVolterraProcesses} is a stochastic volatility model in which the log-price $X=\log(S)$ has risk-neutral dynamics
\begin{flalign*}
       dX_t &= -\tfrac{1}{2}V_t\,dt + \sqrt{V_t}\Big(\rho\, dB_{t}^{1}+\sqrt{1-\rho^2}\,dB_t^{2}\Big), \\
    V_t &= V_0 + \kappa\int_{0}^{t}(t-s)^{\alpha-1}(\bar{v}-V_s)\,ds
          + \varepsilon\int_{0}^{t}(t-s)^{\alpha-1}\sqrt{V_s}\,dB_{s}^{1},
\end{flalign*}
where $B^{1}$ and $B^{2}$ are independent Brownian motions and $\kappa,\bar{v},\varepsilon>0$ play roles analogous to those in the classical Heston model.

The key difference from the Heston model is the convolution kernel $(t-s)^{\alpha-1}$ with $\alpha\in(1/2,1]$; the case $\alpha=1$ reduces to the classical Heston model. The kernel introduces memory into the variance process, so the model is no longer Markovian. When $\alpha<1$, the variance process $V$ has sample paths that are only H\"older continuous with exponent $\alpha-\tfrac{1}{2}$. It has been argued that such roughness is important to reproduce empirically observed features of implied-volatility surfaces \citep[see, e.g.,][]{Fukasawa2011,RoughVolPricing}.

Regarding the finiteness of the second moment, Appendix~\ref{sec:rough-heston-char-funct} shows that $\mathbb{E}[|S_\T|^2]<\infty$ for the parameter values used in the numerical experiments below.

\subsubsection{Fit to option implied volatilities}

For this numerical example, we consider the set of parameters $(S_0, \kappa,\bar{v},\varepsilon,\rho,V_0)$ to be the same ones as in \eqref{eq:heston_params}. The roughness parameter $\alpha$ is set to $\alpha=0.55$, which corresponds to very irregular sample paths of $V_t$ and reflects the values typically found in empirical applications.

For the Wiener chaos martingale model, we fix the time horizon to $\T = 1.5$. As discussed in the previous numerical example, the piecewise-constant basis consistently outperformed the Legendre basis across all reported metrics. The same behavior was observed in the rough Heston experiments. For this reason, and to keep the presentation concise, in what follows we report only the results obtained with the piecewise-constant basis.

The time grid used to construct the piecewise-constant basis functions is build by partitioning $[0,\T]$ into $M=10$ sub-intervals. The Wiener chaos expansion is truncated at order $P=3$. This results in a model with 1,770 parameters.

To fit the model, we compute call option prices on a grid of ten maturities, ranging from one week to 18 months, and nine strikes spanning a moneyness range of $\pm 20\%$. Compared to the Heston case, we focus on shorter maturities to make the calibration more challenging, since the pronounced short-maturity skew generated by the rough Heston model is known to be difficult to match with standard stochastic volatility models.

Figure~\ref{fig:rough_heston_fits} shows the implied volatility surface generated by the rough Heston model together with the calibrated Wiener chaos martingale model surface, while the corresponding error statistics are reported in Table~\ref{tab:error_stats_rough_heston}. Figure~\ref{fig:rough_heston_short_maturity} focuses on the three shortest maturities (one week, two weeks, and one month), where the model reproduces the implied volatility smiles almost exactly, even beyond the range of strikes that are typically liquid. Figure~\ref{fig:rough_heston_fits_oos} depicts the fits to out of sample maturities, which are also very well matched. 

\begin{table}[ht]
\centering
\begin{tabular}{|l|cc|c|}
\hline
& \multicolumn{2}{c|}{MAE (bp)} 
& Calibration Time (s) \\
\hline
Basis 
& Calibrated & Non-calibrated
& Mean $\pm$ Std \\
\hline
piecewise-constant 
& $6.44 \pm 0.67$  & $21.20 \pm 2.22$  & $99.25 \pm 31.74$  \\
\hline
\end{tabular}
\caption{Statistics over 100 calibrations to Rough Heston option prices. We report the mean absolute error (MAE) of the implied volatility surfaces in basis points (mean $\pm$ standard deviation) for calibrated and non-calibrated maturities, together with the average calibration time in seconds (mean $\pm$ standard deviation).}
\label{tab:error_stats_rough_heston}
\end{table}

\begin{figure}[!h]
	\begin{center}
	\begin{subfigure}[b]{0.48\textwidth}
		\includegraphics[scale=0.45]{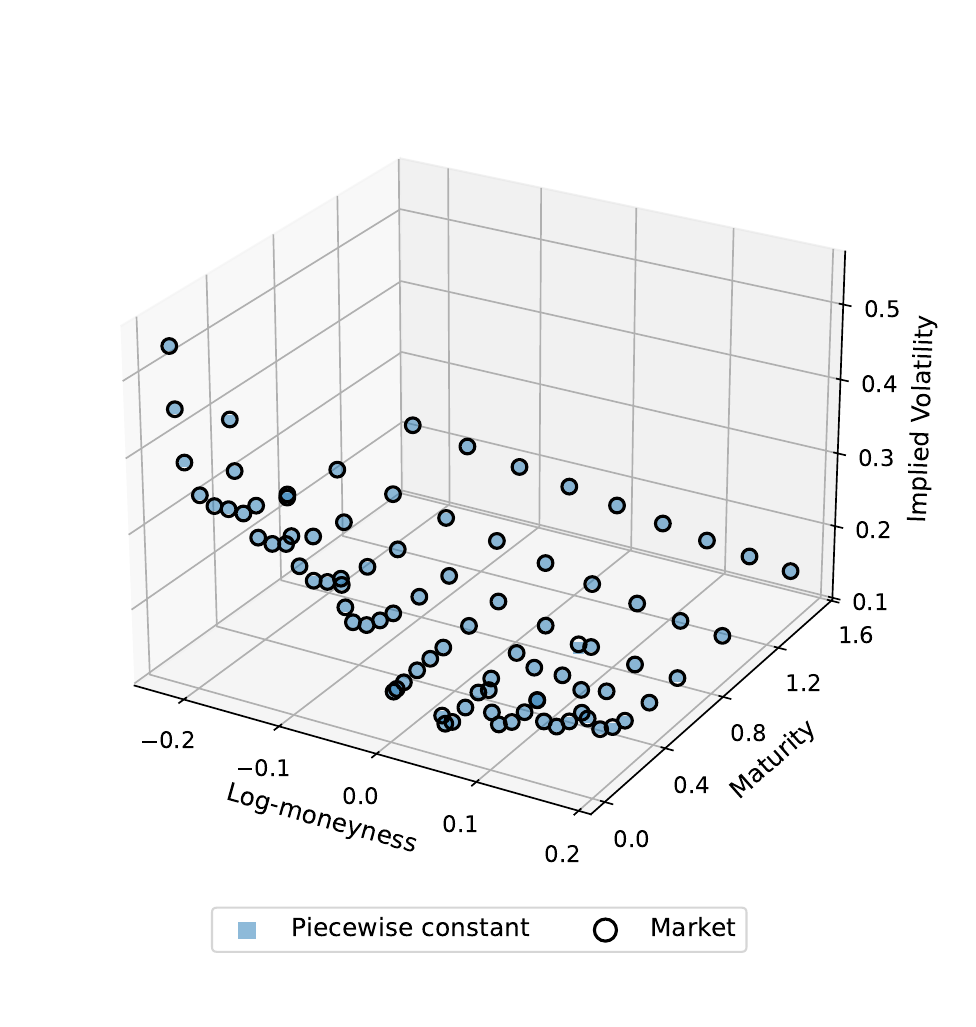}
	\end{subfigure}
	\begin{subfigure}[b]{0.48\textwidth}
		\includegraphics[scale=0.45]{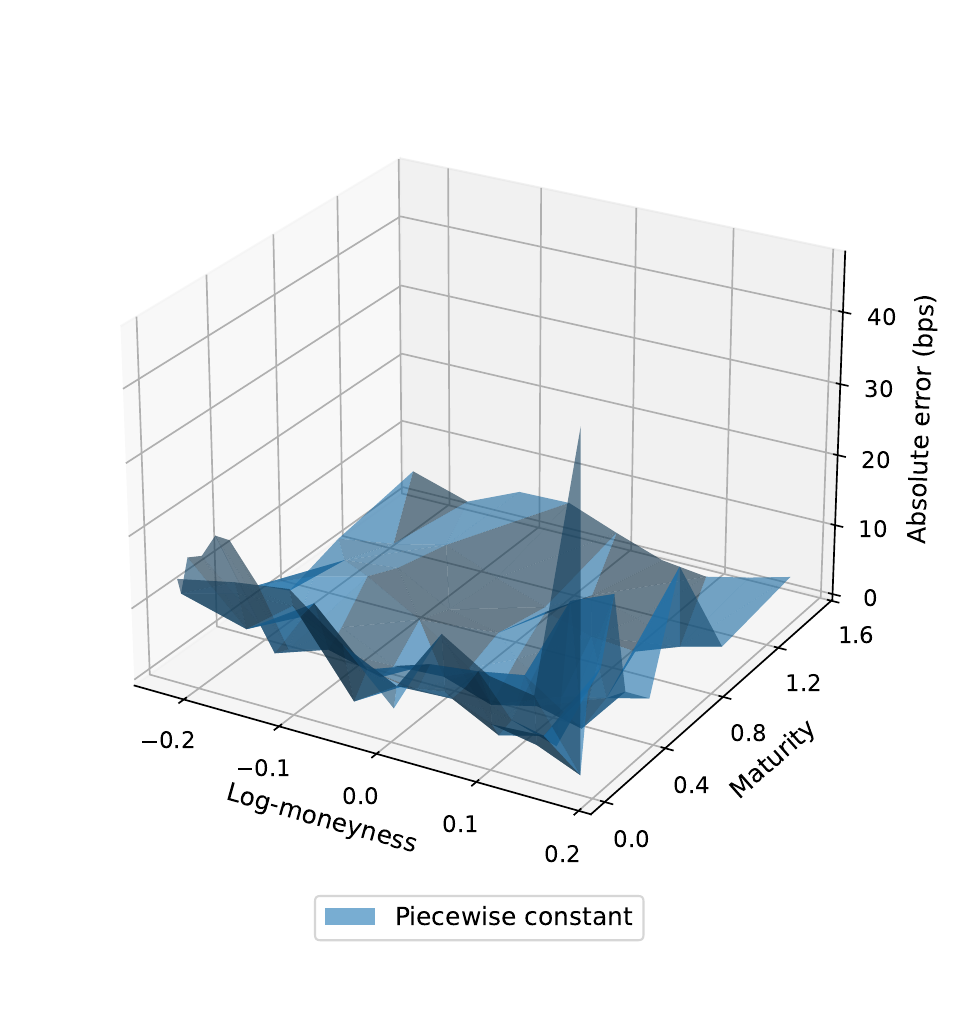}
	\end{subfigure}
	\end{center}
	\vspace{0.4cm}

    \begin{center}
	\begin{subfigure}[b]{0.96\textwidth}
		\includegraphics[width=\linewidth]{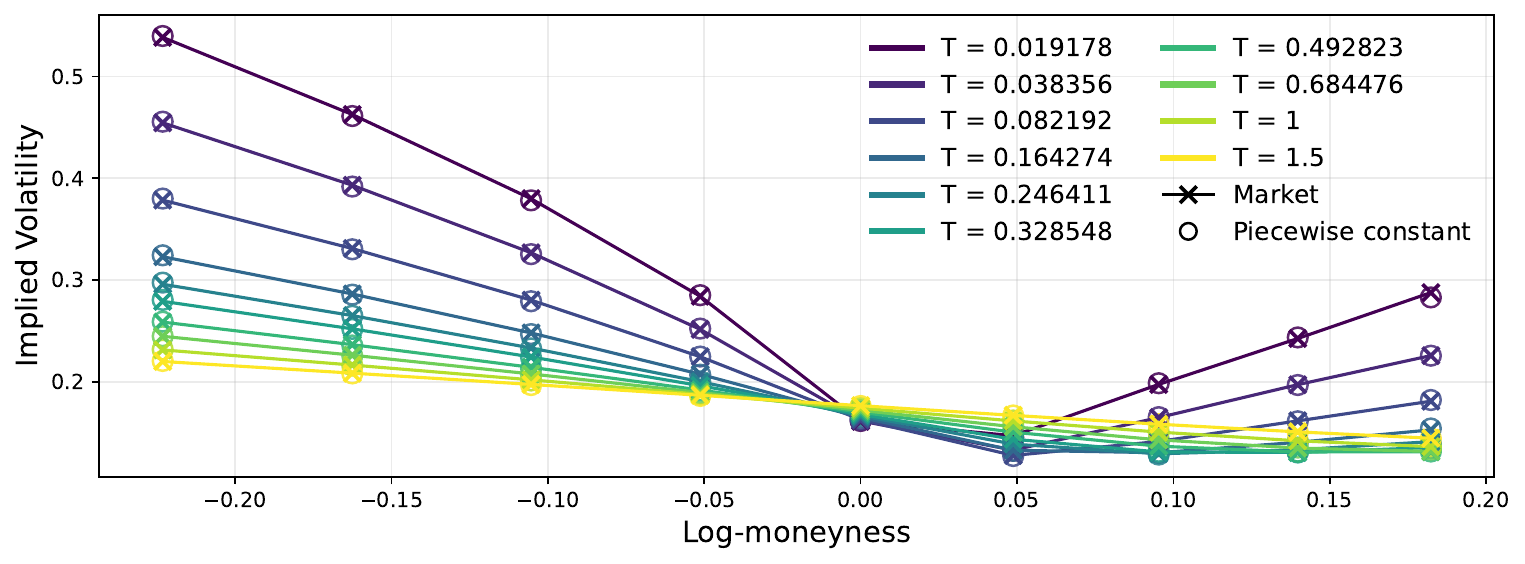}
	\end{subfigure}
    \end{center}
	
	\caption{Implied volatility surfaces at the calibrated maturities in the rough Heston model (top left) and surfaces of absolute errors (top right). The bottom panel shows the individual implied volatility smiles. The MAE for these smiles is 6.75 bp.}\label{fig:rough_heston_fits}
\end{figure}

\begin{figure}[!h]
	\begin{center}
	\begin{subfigure}[b]{0.48\textwidth}
		\includegraphics[scale=0.45]{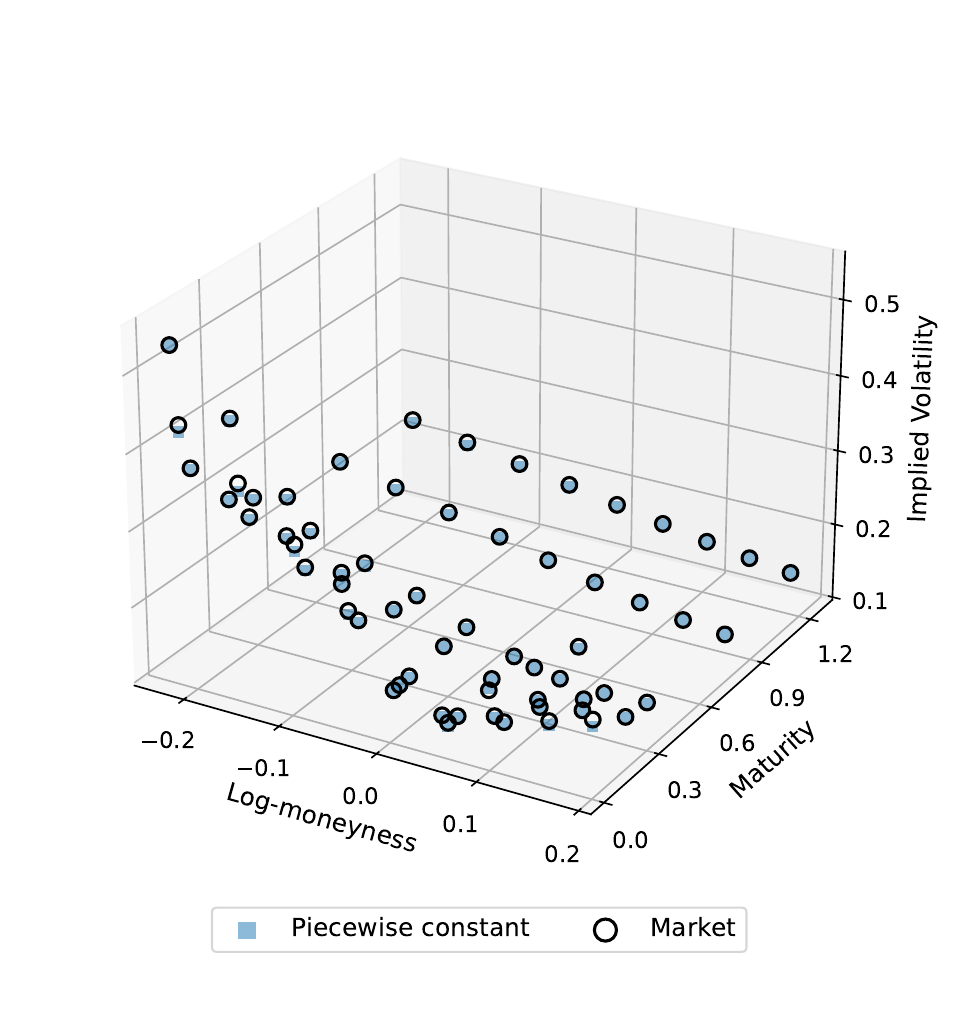}
	\end{subfigure}
	\begin{subfigure}[b]{0.48\textwidth}
		\includegraphics[scale=0.45]{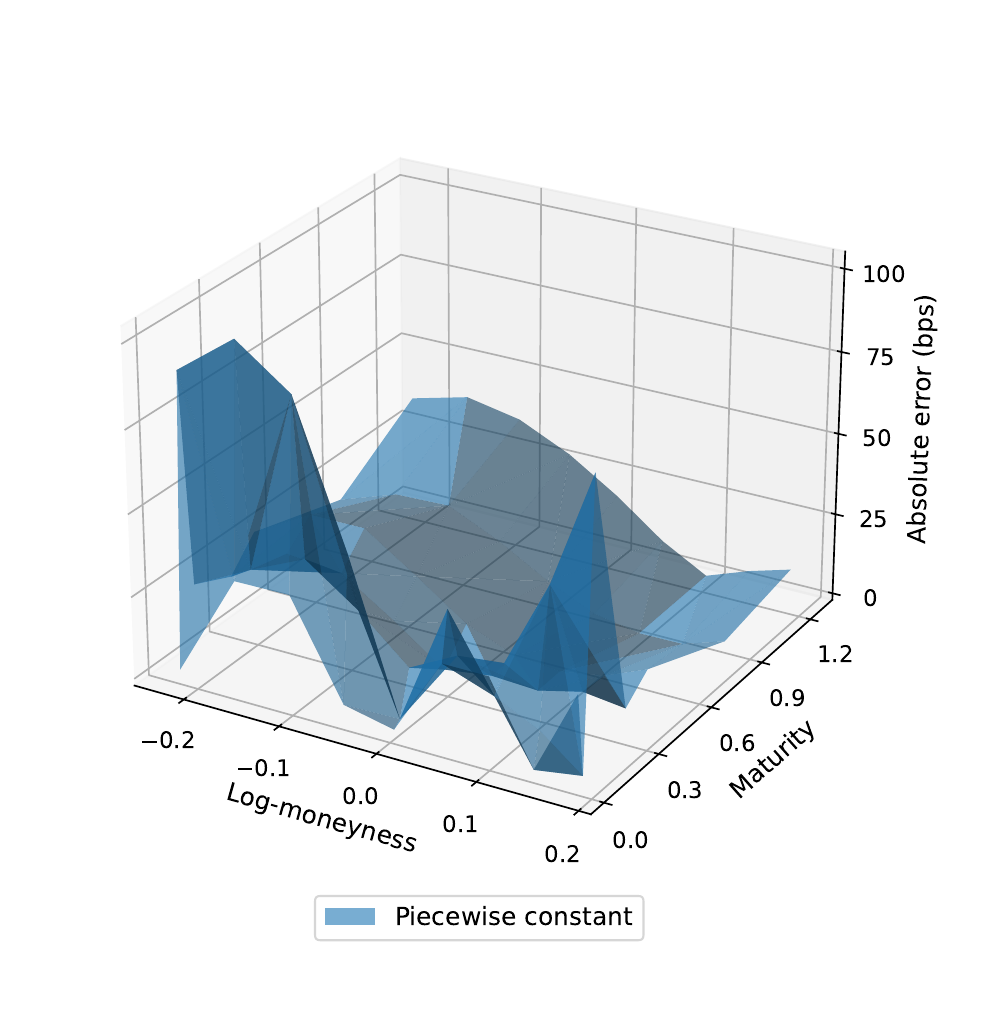}
	\end{subfigure}
	\end{center}
	\vspace{0.4cm}

    \begin{center}
	\begin{subfigure}[b]{0.96\textwidth}
		\includegraphics[width=\linewidth]{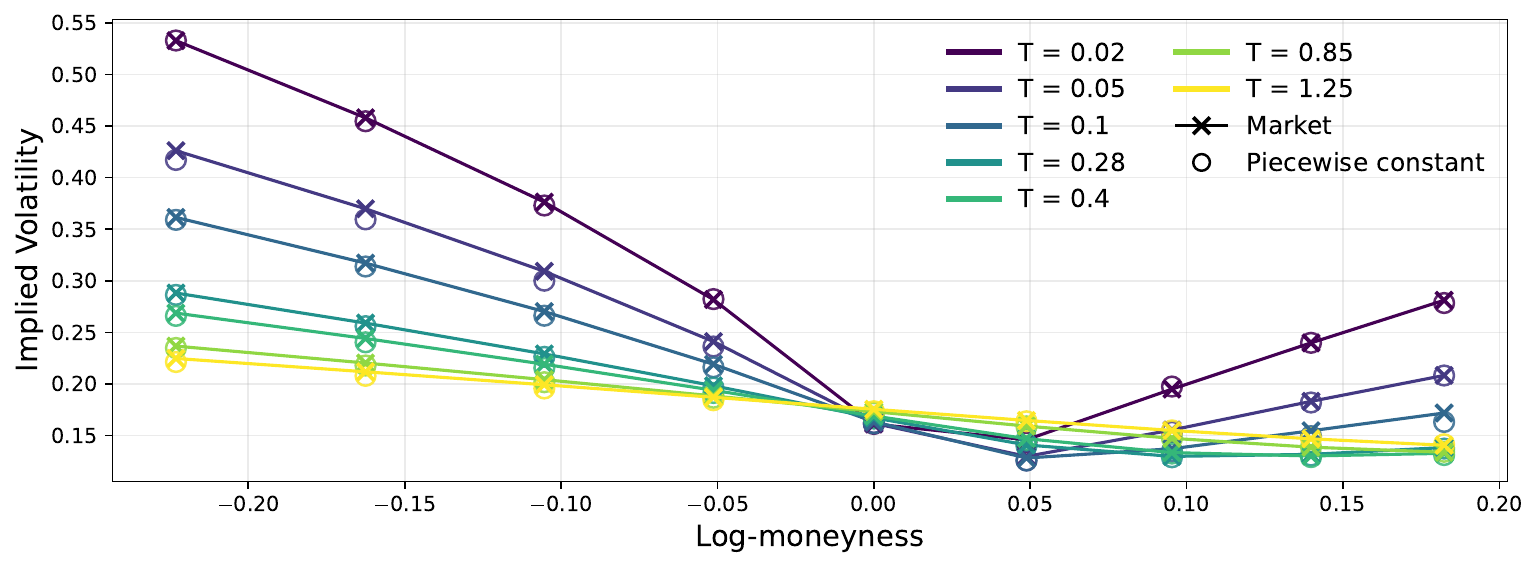}
	\end{subfigure}
    \end{center}
	
	\caption{Implied volatility surfaces at the non-calibrated maturities in the rough Heston model (top left) and surfaces of absolute errors (top right). The bottom panel shows the individual implied volatility smiles. The MAE for these smiles is 23.35 bp.}\label{fig:rough_heston_fits_oos}
\end{figure}

\begin{figure}[h]
	\begin{center}
    \textbf{T = 1 week}
	\begin{subfigure}[b]{0.96\textwidth}
        \centering
		\includegraphics[width=\linewidth]{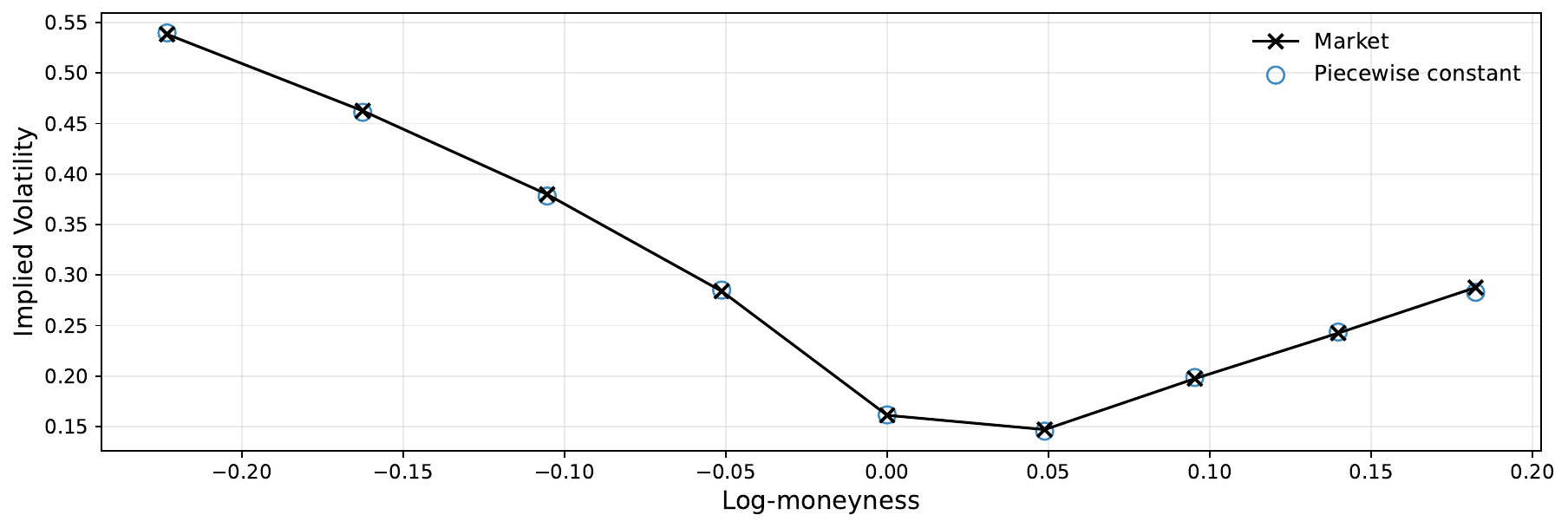}
	\end{subfigure}
    \end{center}

    \vspace{0.4cm}
    
    \begin{center}
    \textbf{T = 2 weeks}
	\begin{subfigure}[b]{0.96\textwidth}
        \centering
		\includegraphics[width=\linewidth]{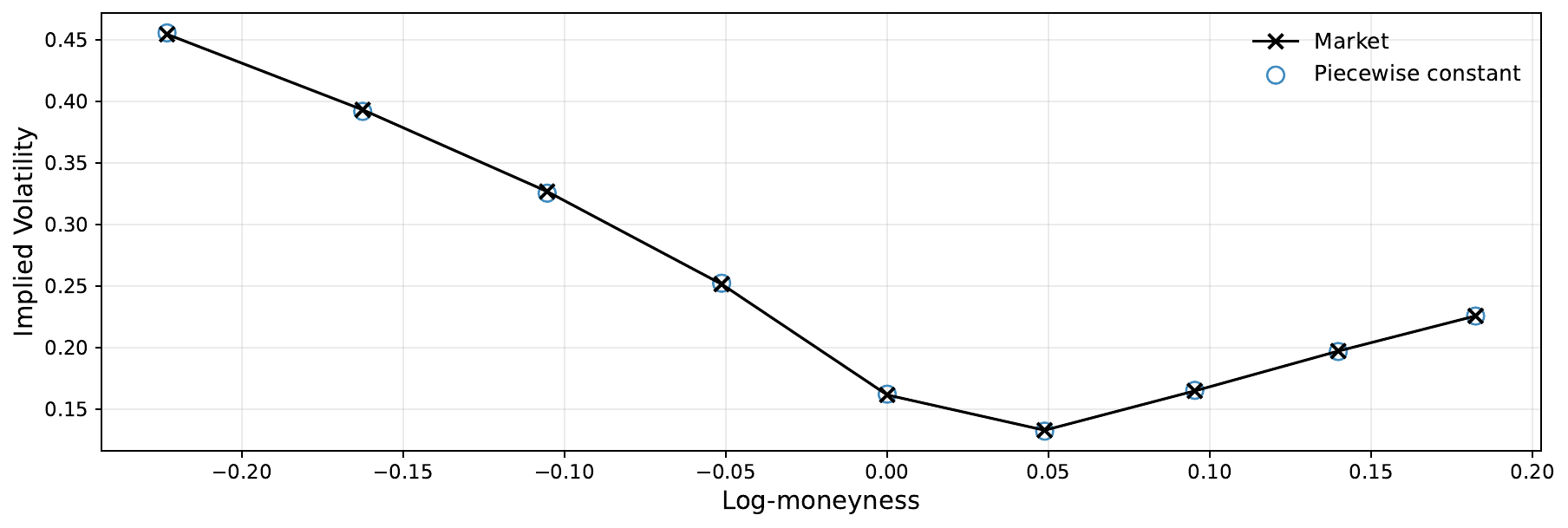}
	\end{subfigure}
	\end{center}
	\vspace{0.4cm}

    \begin{center}
    \textbf{T = 1 month}
	\begin{subfigure}[b]{0.96\textwidth}
        \centering
		\includegraphics[width=\linewidth]{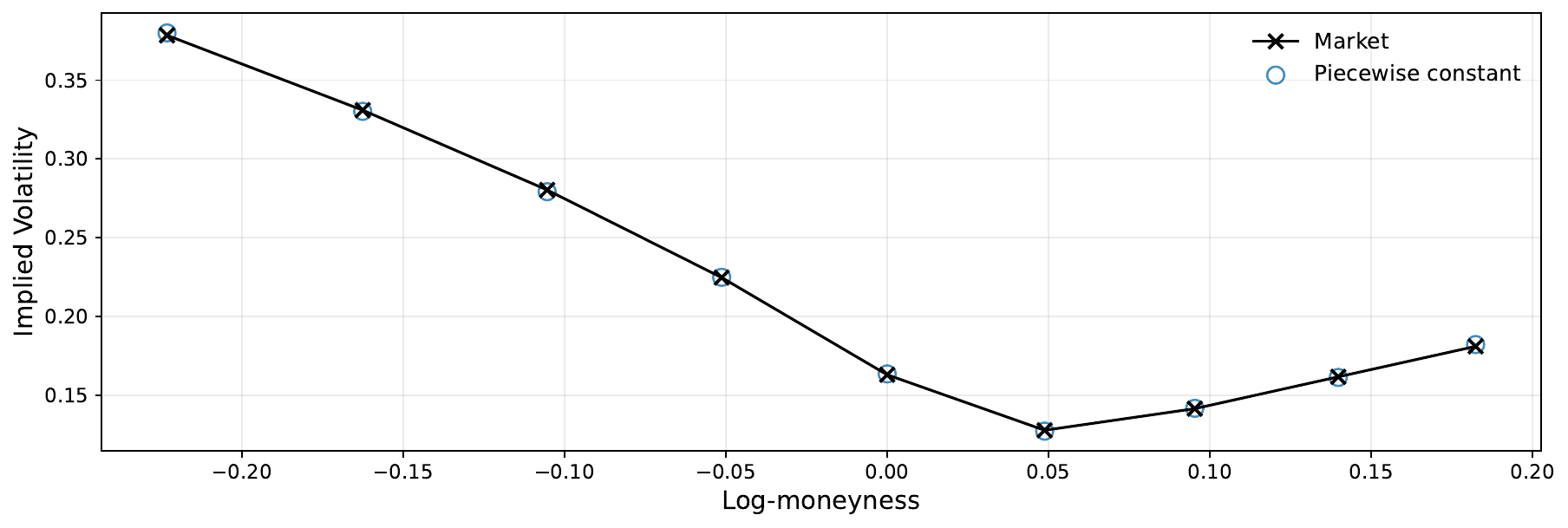}
	\end{subfigure}
    \end{center}
	
	\caption{Implied volatility fits to short maturities in the rough Heston model.}\label{fig:rough_heston_short_maturity}
\end{figure}

\subsection{Fit to SPX options data}\label{sec:data}

We conclude the numerical experiments by fitting the model to real option data. We use SPX option prices from August 24, 2022, obtained from OptionMetrics. We consider nine maturities between 9 days and 16 months; the available moneyness range varies across maturities depending on quote liquidity. In total, this yields 188 option prices. Discount factors and a constant dividend yield are extracted from the observed prices at each maturity, based on the put-call parity and are included in the calibration.

As in the previous example, we only consider the model with the piecewise-constant basis. We fix the time horizon $\T = 1.30$ and build the time grid used to construct the piecewise-constant basis functions by partitioning $[0, \T]$ into $M = 10$ sub-intervals. The Wiener chaos expansion is truncated at order $P=3$. This results in a model with 1,770 parameters. 

Figure~\ref{fig:data_fits} compares the market implied-volatility smiles with those produced by the calibrated Wiener chaos martingale model, and Table~\ref{tab:error_stats_spx} reports the corresponding error statistics over 100 calibrations.

The total time to calibrate is around 60 seconds and the resulting implied volatility surface is depicted on Figure~\ref{fig:data_fits}, with errors tabulated in Table~\ref{tab:error_stats_spx}. We can see how the Wiener chaos martingale model is able to fit very accurately to the observed implied volatility surface.  

\begin{table}[H]
\centering
\begin{tabular}{|l|c|c|}
\hline
& MAE (bp) & Calibration Time (s) \\
\hline
Basis & Calibrated & Mean $\pm$ Std \\
\hline
piecewise-constant & $10.14 \pm 1.14$ & $97.85 \pm 11.74$ \\
\hline
\end{tabular}
\caption{Statistics over 100 calibrations to SPX option prices. We report the mean absolute error (MAE) of the implied volatility surface in basis points (mean $\pm$ standard deviation), together with the average calibration time in seconds (mean $\pm$ standard deviation).}
\label{tab:error_stats_spx}
\end{table}

\begin{figure}[!h]
	\centering
	\begin{subfigure}[b]{0.32\textwidth}
        \caption*{\textbf{T=9 days}}
		\includegraphics[width=\linewidth]{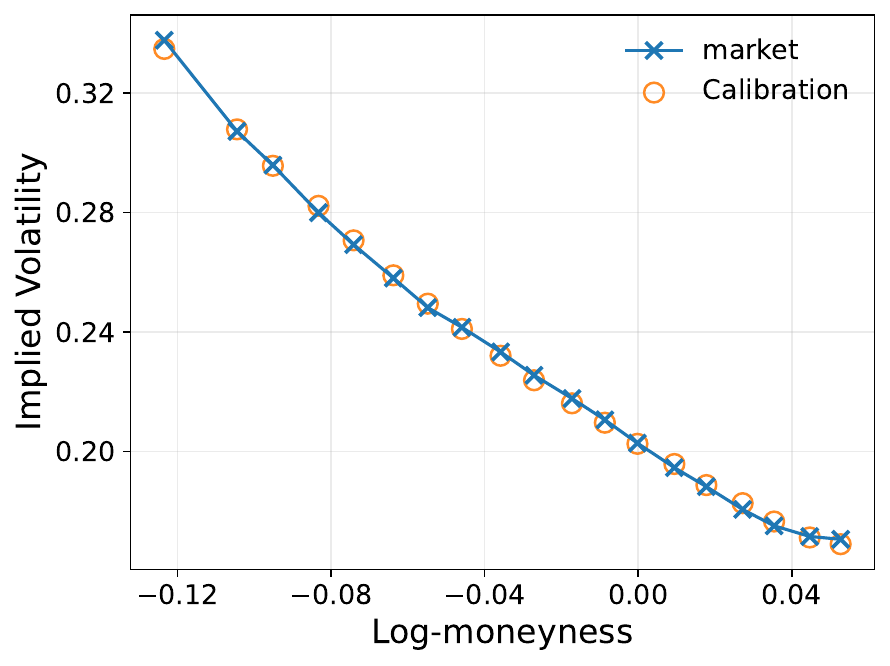}
	\end{subfigure}
	\hfill
	\begin{subfigure}[b]{0.32\textwidth}
        \caption*{\textbf{T=16 days}}
		\includegraphics[width=\linewidth]{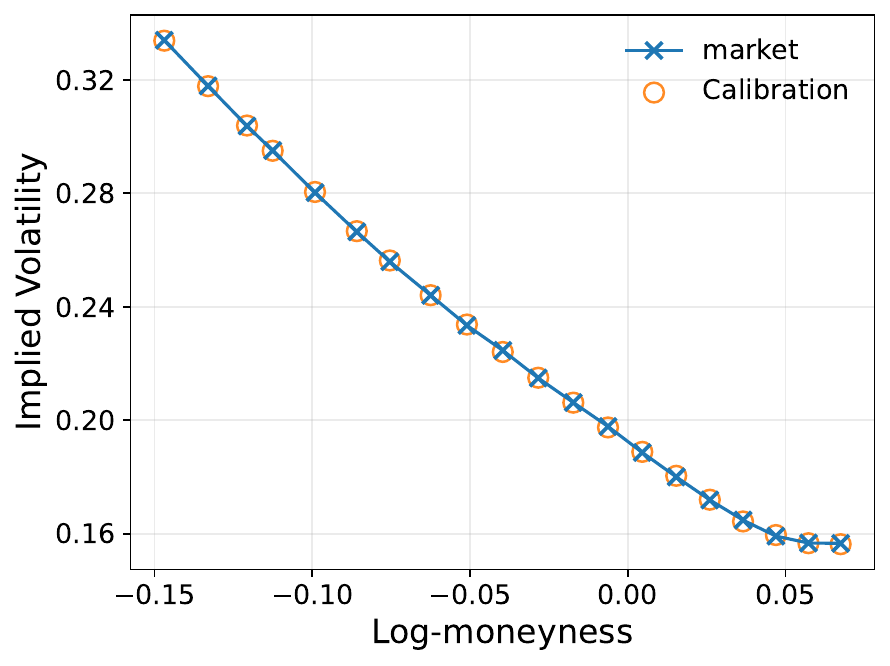}
	\end{subfigure}
	\hfill
	\begin{subfigure}[b]{0.32\textwidth}
        \caption*{\textbf{T=23 days}}
		\includegraphics[width=\linewidth]{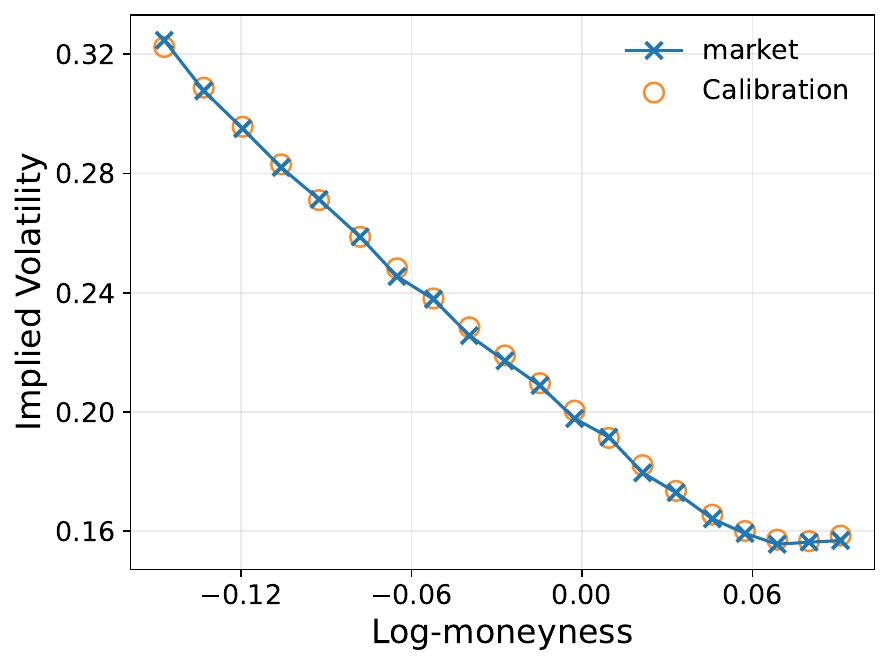}
	\end{subfigure}
    \vspace{0.4cm}
    \begin{subfigure}[b]{0.32\textwidth}
        \caption*{\textbf{T=58 days}}
		\includegraphics[width=\linewidth]{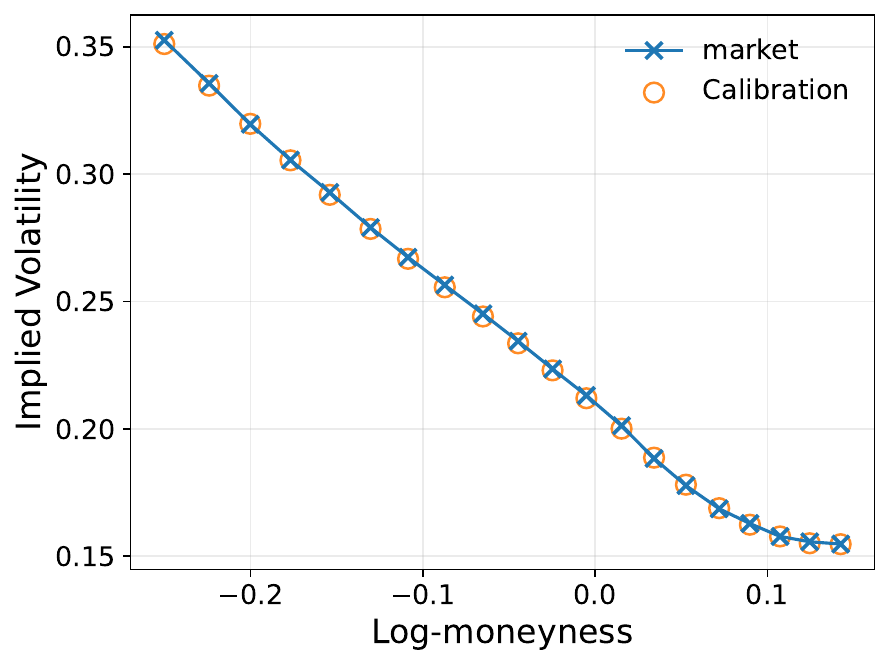}
	\end{subfigure}
	\hfill
	\begin{subfigure}[b]{0.32\textwidth}
        \caption*{\textbf{T=86 days}}
		\includegraphics[width=\linewidth]{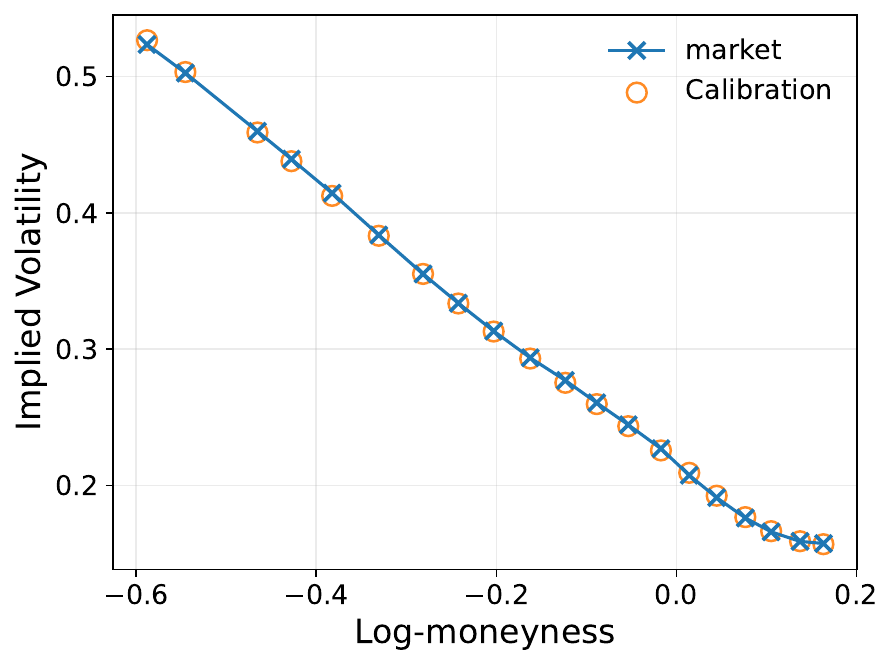}
	\end{subfigure}
	\hfill
	\begin{subfigure}[b]{0.32\textwidth}
        \caption*{\textbf{T=114 days}}
		\includegraphics[width=\linewidth]{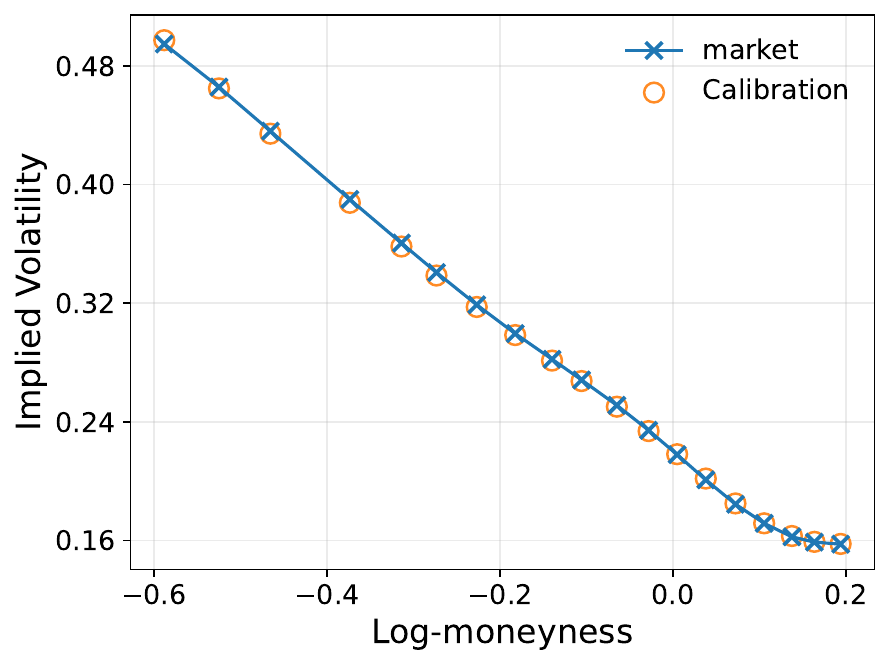}
	\end{subfigure}
    \vspace{0.4cm}
    \begin{subfigure}[b]{0.32\textwidth}
        \caption*{\textbf{T=296 days}}
		\includegraphics[width=\linewidth]{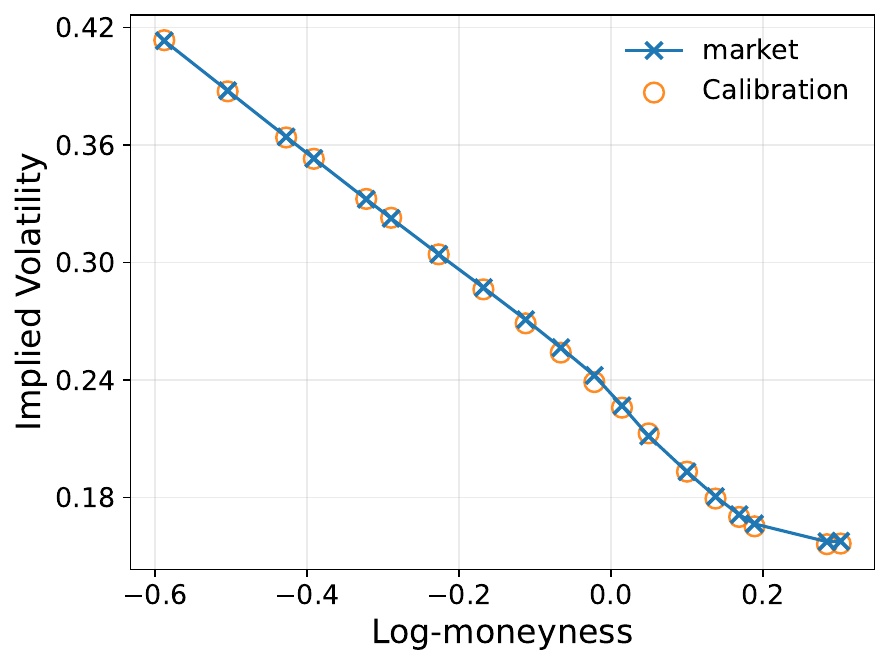}
	\end{subfigure}
	\hfill
	\begin{subfigure}[b]{0.32\textwidth}
        \caption*{\textbf{T=359 days}}
		\includegraphics[width=\linewidth]{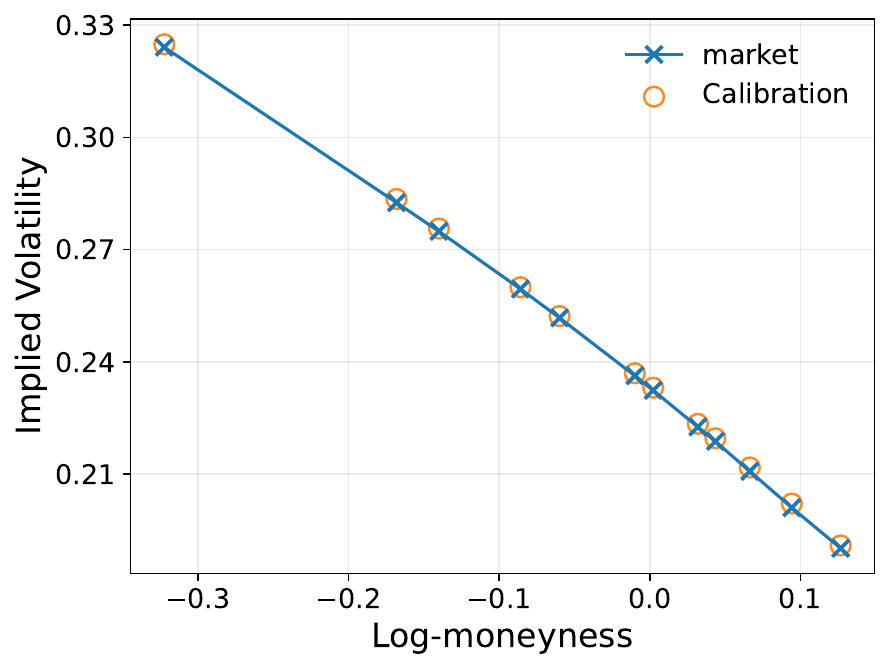}
	\end{subfigure}
	\hfill
	\begin{subfigure}[b]{0.32\textwidth}
        \caption*{\textbf{T=478 days}}
		\includegraphics[width=\linewidth]{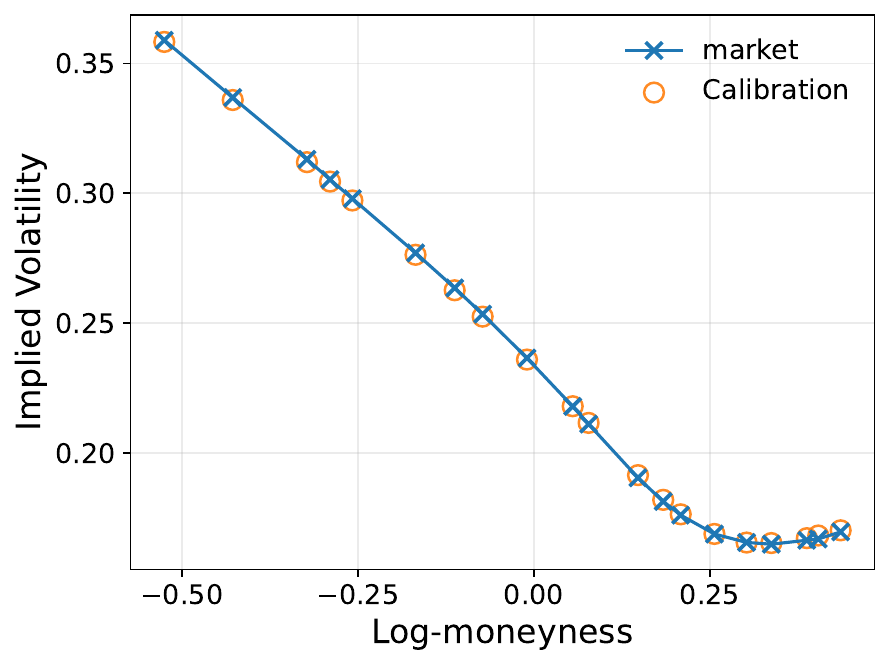}
	\end{subfigure}
	\caption{Calibrated implied volatility smiles from SPX data. The blue crosses represent the observed implied volatilities and the orange circles represent the fitted implied volatilities. The MAE for these smiles is 11.32 bp.}\label{fig:data_fits}
\end{figure}

\clearpage

\section*{Funding}
Pere Diaz-Lozano gratefully acknowledges support from the SURE-AI Center, funded by the Research Council of Norway (grant no. 357482).

Thomas K. Kloster gratefully acknowledges financial support from the Center of Research in Energy: Economics and Markets and The Danish Council of Independent Research under DFF grant 10.46540/5247-00005B.

\appendix

\section{Calibration details}\label{sec:calibration-details}

In this section, we provide implementation details for the calibration procedure used in the numerical experiments of Section~\ref{sec:numerical-experiments}.

\subsection{Calibration hyperparameters}

In all cases, we initialize $\theta$ by sampling from a centered multivariate normal distribution with independent components and standard deviation $10^{-4}$.

As discussed in Section \ref{sec:calibration}, the parameters of the model are calibrated by minimizing the Vega-weighted mean squared error between market and model option prices using a stochastic gradient-based optimizer. In all experiments, we employ the AdamW algorithm, which combines adaptive learning rates with weight decay regularization to stabilize training in the presence of a large number of parameters.

The optimization is run for $10{,}000$ iterations with a learning rate of $10^{-3}$. A weight decay coefficient of $1$ is used as $L^2$ regularization on the chaos coefficients in order to prevent overfitting and to promote numerical stability. 

As described in Section~\ref{sec:MC}, control variates are employed in the Monte Carlo pricing approximation. For piecewise-constant kernels, we use control variates of polynomial degree $2$, whereas for the Legendre basis we use degree $1$. In all cases, the corresponding coefficients $\beta$ are estimated using an independent set of $10^4$ Monte Carlo samples. To further reduce overfitting of the MC estimator during calibration, the underlying Brownian paths are periodically resimulated every $50$ iterations.

A patience-based stopping criterion is used: The optimizer continues updating the parameters as long as the loss decreases by more than a tolerance of $10^{-7}$. If the loss doesn't decrease by the tolerance within 1,000 steps, training is concluded.

All reported results correspond to the parameter set achieving the lowest observed calibration loss during the optimization. In the MC estimators, we show the results corresponding to simulated samples not used in the calibration.

We next specify, for each maturity, the corresponding details used to approximate model option prices. Recall that we propose two pricing approaches: a Monte Carlo estimator (MC) and Gauss--Hermite quadrature (Q). For the Monte Carlo method, we report the number of simulated paths, while for Gauss--Hermite quadrature we report the number of nodes used.


\begin{table}[ht]
\centering
\scriptsize
\setlength{\tabcolsep}{3pt}
\renewcommand{\arraystretch}{1.15}

\begin{tabularx}{\linewidth}{l*{7}{>{\centering\arraybackslash}X}}
\toprule
Basis
& \makecell{$T_1$\\{\tiny(0.0821)}}
& \makecell{$T_2$\\{\tiny(0.1725)}}
& \makecell{$T_3$\\{\tiny(0.2491)}}
& \makecell{$T_4$\\{\tiny(0.4983)}}
& \makecell{$T_5$\\{\tiny(0.9884)}}
& \makecell{$T_6$\\{\tiny(1.4867)}}
& \makecell{$T_7$\\{\tiny(1.974)}} \\
\midrule
Piecewise-constant & \Q{40} & \Q{25} & \MC{100k} & \MC{100k} & \MC{100k} & \MC{100k} & \MC{100k} \\
Legendre           & \MC{1M} & \MC{500k} & \MC{100k} & \MC{100k} & \MC{100k} & \MC{100k} & \MC{100k} \\
\bottomrule
\end{tabularx}

\caption{Pricing method used for the calibrated maturities in the Heston example.}
\label{tab:pricing_methods_heston_calibrated}
\end{table}

\begin{table}[H]
\centering
\scriptsize
\setlength{\tabcolsep}{3pt}
\renewcommand{\arraystretch}{1.15}

\begin{tabularx}{\linewidth}{l*{6}{>{\centering\arraybackslash}X}}
\toprule
Basis
& \makecell{$T_1$\\{\tiny(0.13)}}
& \makecell{$T_2$\\{\tiny(0.21)}}
& \makecell{$T_3$\\{\tiny(0.35)}}
& \makecell{$T_4$\\{\tiny(0.75)}}
& \makecell{$T_5$\\{\tiny(1.25)}}
& \makecell{$T_6$\\{\tiny(1.75)}} \\
\midrule
Piecewise-constant & \Q{30} & \Q{15} & \MC{100k} & \MC{100k} & \MC{100k} & \MC{100k} \\
Legendre           & \MC{1M} & \MC{1M} & \MC{100k} & \MC{100k} & \MC{100k} & \MC{100k} \\
\bottomrule
\end{tabularx}

\caption{Pricing method used for the non-calibrated maturities in the Heston example.}
\label{tab:pricing_methods_heston_non_calibrated}
\end{table}


\begin{table}[H]
\centering
\scriptsize
\setlength{\tabcolsep}{3pt}
\renewcommand{\arraystretch}{1.15}

\begin{tabularx}{\linewidth}{l*{10}{>{\centering\arraybackslash}X}}
\toprule
Basis
& \makecell{$T_1$\\{\tiny(0.0191)}}
& \makecell{$T_2$\\{\tiny(0.0383)}}
& \makecell{$T_3$\\{\tiny(0.0821)}}
& \makecell{$T_4$\\{\tiny(0.1642)}}
& \makecell{$T_5$\\{\tiny(0.2464)}}
& \makecell{$T_6$\\{\tiny(0.3285)}}
& \makecell{$T_7$\\{\tiny(0.4928)}} 
& \makecell{$T_8$\\{\tiny(0.6844)}}
& \makecell{$T_9$\\{\tiny(1.0000)}}
& \makecell{$T_{10}$\\{\tiny(1.5000)}}\\
\midrule
Piecewise-constant & \Q{40} & \Q{25} & \MC{500k} & \MC{500k} & \MC{100k} & \MC{100k} & \MC{100k} & \MC{100k} & \MC{100k} & \MC{100k}\\
\bottomrule
\end{tabularx}

\caption{Pricing method used for the calibrated maturities in the rough Heston example.}
\label{tab:pricing_methods_rough_heston_calibrated}
\end{table}

\begin{table}[H]
\centering
\scriptsize
\setlength{\tabcolsep}{3pt}
\renewcommand{\arraystretch}{1.15}

\begin{tabularx}{\linewidth}{l*{7}{>{\centering\arraybackslash}X}}
\toprule
Basis
& \makecell{$T_1$\\{\tiny(0.02)}}
& \makecell{$T_2$\\{\tiny(0.05)}}
& \makecell{$T_3$\\{\tiny(0.10)}}
& \makecell{$T_4$\\{\tiny(0.28)}}
& \makecell{$T_5$\\{\tiny(0.40)}}
& \makecell{$T_6$\\{\tiny(0.85)}} 
& \makecell{$T_7$\\{\tiny(1.25)}} \\
\midrule
Piecewise-constant & \Q{30} & \Q{15} & \MC{100k} & \MC{100k} & \MC{100k} & \MC{100k} & \MC{100k} \\
\bottomrule
\end{tabularx}

\caption{Pricing method used for the non-calibrated maturities in the rough Heston example.}
\label{tab:pricing_methods_rough_heston_non_calibrated}
\end{table}


\begin{table}[H]
\centering
\scriptsize
\setlength{\tabcolsep}{3pt}
\renewcommand{\arraystretch}{1.15}

\begin{tabularx}{\linewidth}{l*{10}{>{\centering\arraybackslash}X}}
\toprule
Basis
& \makecell{$T_1$\\{\tiny(0.0246)}}
& \makecell{$T_2$\\{\tiny(0.0438)}}
& \makecell{$T_3$\\{\tiny(0.0629)}}
& \makecell{$T_4$\\{\tiny(0.1587)}}
& \makecell{$T_5$\\{\tiny(0.2354)}}
& \makecell{$T_6$\\{\tiny(0.3121)}} 
& \makecell{$T_7$\\{\tiny(0.3504)}} 
& \makecell{$T_8$\\{\tiny(0.8104)}}
& \makecell{$T_9$\\{\tiny(0.9828)}}
& \makecell{$T_{10}$\\{\tiny(1.3086)}}
\\
\midrule
Piecewise-constant & \Q{40} & \Q{25} & \MC{100k} & \MC{100k} & \MC{50k} & \MC{50k} & \MC{50k} & \MC{50k} & \MC{50k} & \MC{50k}\\
\bottomrule
\end{tabularx}

\caption{Pricing method used for the calibrated maturities in the SPX example.}
\label{tab:pricing_methods_real data}
\end{table}

\section{Characteristic functions of the log-price under the Heston and Rough Heston models}

Both the Heston and Rough Heston models admit semi-closed-form expressions for the characteristic function of the log-price. This allows European call option prices to be computed accurately using transform inversion techniques, such as the Lewis formula
of \textcite{Lewis2000}
\begin{equation}\label{eq:lewis_formula}
\mathbb{E}\!\left[ (S_T-K)^+ \right] = S_{0}e^{-qT}-\frac{e^{-rT}K}{\pi}\int_{0}^{\infty}\Re \left\lbrace e^{\left(\mathrm{i}u+\tfrac{1}{2}\right)\left(\log \left(\tfrac{S_0}{K}\right) +(r-q)T\right)}\varphi_{X_T}(u-\tfrac{\mathrm{i}}{2})\frac{1}{u^2 +\tfrac{1}{4}}\right\rbrace du,
\end{equation}
where $\varphi_{X_T}(u)=\mathbb{E}\!\left[ e^{\mathrm{i}uX_T} \right]$ is the characteristic function of the log-price $X_T=\log (S_T)$. 

\subsection{Heston model}\label{sec:heston-char-funct}

\textbf{Characteristic function.} The characteristic function of the log-price for the Heston model is available in closed form, see e.g. \textcite{Gatheral2006}. This is given by
\[
\mathbb{E}\!\left[ e^{\mathrm{i}uX_t}\right] = \exp\left( uX_{0} + a(t)+b(t)V_0 \right),
\]
where the coefficients $a(t),b(t)$ are given by 
\[
\begin{aligned}
a(t)&=-\kappa \bar{v}\left( \frac{\gamma +\beta}{\varepsilon^2}t +\frac{2}{\varepsilon}\log\left( 1-\frac{\gamma + \beta}{2\gamma}\left( 1- e^{-\gamma t} \right) \right) \right),\\
b(t) &= \frac{u(1-u)(1-e^{-\gamma t})}{2\gamma - (\gamma +\beta)(1-e^{-\gamma t})},
\end{aligned}
\]
with $\beta = u\rho \varepsilon-\kappa$ and $\gamma = \sqrt{(\kappa - u\rho\varepsilon)^2 - \varepsilon^2 (u^2 -u)}$. 

\vspace{0.5cm}

\textbf{Moment explosions.} In the Heston model, the moment explosion time
\[
T^{\ast}(u)
=
\sup \left\{ t \ge 0 : \mathbb{E}\!\left[S_t^{\,u}\right] < \infty \right\}
\]
has been explicitly characterized in \textcite{AndersenPiterbarg2007,KellerRessel2011momentexplosions}. Specializing to the case $u=2$, one obtains that $\mathbb{E}[S_t^2] < \infty$ for all $t \ge 0$ (equivalently, $T^{\ast}(1.974)=\infty$) if and only if
\begin{equation}\label{eq:heston_explosion_condition}
\Delta_2 =(2\rho \varepsilon - \kappa)^2 - 2 \varepsilon^2 \geq 0,
\qquad
\chi_2=2\rho\varepsilon-\kappa < 0.
\end{equation}

\subsection{Rough Heston model}\label{sec:rough-heston-char-funct}

\textbf{Characteristic function.} The characteristic function of the log-price in the rough Heston model is available in semi-closed form; see, for instance, \textcite{RoughHeston1}. It can be written as
\[
\mathbb{E}\!\left[e^{\mathrm{i}uX_t}\right]
=
\exp\!\left(
\mathrm{i}uX_t
+
\kappa \bar{v}\int_{0}^{t}\psi(s)\,ds
+
V_0\int_{0}^{t} R\big(u,\psi(s)\big)\,ds
\right),
\]
where 
\[
R(u,w) = \frac{u^2-u}{2} - (\kappa - u\rho\varepsilon)w + \frac{\varepsilon^2}{2}w^2,
\]
and $\psi$ solves the fractional Riccati equation
\begin{equation}\label{eq:rh_fractional_ricatti}
D^\alpha \psi(t) = R\big(u,\psi(t)\big).
\end{equation}

The fractional Riccati equation \eqref{eq:rh_fractional_ricatti} does not admit a closed-form solution and must be solved numerically, which can be computationally demanding. To address this issue, \textcite{RationalApproximations} propose an accurate approximation of the rough Heston solution $\psi$ based on rational functions. This approximation allows option prices to be computed both efficiently and with high accuracy.

\vspace{0.5cm}

\textbf{Moment explosions.} Moment explosions in the rough Heston model are analyzed in \textcite{RoughHestonMomentExplosions}. Although the analysis is more involved than in the classical Heston setting, Theorem~2.4 in \textcite{RoughHestonMomentExplosions} shows that the explosion time $T^{\ast}(u)$ is finite in the rough Heston model if and only if it is finite in the corresponding classical Heston model obtained by setting $\alpha = 1$. Based on the analysis in \ref{sec:heston-char-funct} we thus conclude that $T^\ast (1.5)=\infty$.

\section{Exotic option prices and implied volatilities}\label{sec:exotic-iv}
In the numerical experiments with the Heston model, we consider how well the calibrated model matches implied volatilities of some path dependent exotic options to which the model is \emph{not} calibrated. We consider the following three types of exotic options, which all admit closed form prices in the Black-Scholes model. Up to strike, maturity, interest rate, and dividend yield, these prices all depend only on the model parameter $\sigma$, and we can thus define the corresponding exotic option implied volatilities by numerical inversion in $\sigma$, once the prices are computed via simulation.
\begin{enumerate}
    \item \textbf{Forward starting call options:} These are simply call options that start at a pre-specified future date, $\tau$, and their prices are therefore sensitive to the forward evolution of the model implied volatility surface. The Black-Scholes price of a forward starting option is
    \[
    C_{BS}^{\mathrm{fwd}}(S_0,T,K,\tau;\sigma) = e^{-q\tau}C_{BS}(S_0,T-\tau,K;\sigma),
    \]
    where $C_{BS}$ denotes the Black-Scholes call option price. 
    \item \textbf{Down-and-out call options:} These have payoff $f(S_T)=(S_T-K)^+\mathbf{1}_{\lbrace \inf_{t\in [0,T]}S_{t}>L \rbrace},$ for some lower barrier $L>0$. The Black-Scholes price of a down-and-out call option is 
    \[
    C_{BS}^{\mathrm{dao}}(S_0,T,K,L;\sigma) = C_{BS}(S_0,T,K;\sigma)-\left(\frac{L}{S_0}\right)^{\frac{r-q-\sigma^2/2}{\sigma^2}}C_{BS}(L^2/S_0,T,K;\sigma).
    \]
    \item \textbf{Floating strike lookback call options:} These have payoff $f(S_T)=(S_T-\inf_{t\in [0,T]}S_t)^+$. The Black-Scholes price of a lookback call option is
    \[
    C_{BS}^{\mathrm{lb}}(S_0,T;\sigma) = C_{BS}(S_0,T,S_0;\sigma)-\frac{S_0e^{-qT}\sigma^2}{2(r-q)}\left( \Phi(-d_1)- e^{-(r-q)T}\Phi(-d_3) \right),
    \]
    where $d_1=\frac{(r-q+\frac{1}{2}\sigma^2)T}{\sigma \sqrt{T}}$ and $d_3 = d_1 - \frac{2(r-q)\sqrt{T}}{\sigma}$.
\end{enumerate}

\printbibliography
\end{document}